\documentclass[journal]{IEEEtran}
\usepackage{bbm}
\usepackage{mathrsfs}
\hyphenation{net-works}
\usepackage{amsfonts}
\usepackage[dvips]{graphicx}
\usepackage{epsfig,latexsym}
\usepackage{float}
\usepackage{indentfirst}
\usepackage{amsmath}
\usepackage{bm}
\usepackage{amssymb}
\usepackage{times}
\usepackage{subfigure}
\usepackage{psfrag}
\usepackage{cite}
\usepackage{subfigure}
\usepackage{algorithm}
\usepackage{algorithmic}
\usepackage{color}
\usepackage{array}

\newtheorem{theorem}{\textbf{Theorem}}
\newtheorem{lemma}{\textbf{Lemma}}

\newtheorem{definition}{\textbf{Definition}}
\newtheorem{proposition}{\textbf{Proposition}}
\newtheorem{remark}{\textbf{Remark}}

\interdisplaylinepenalty=2500 \hyphenation{net-works}

\begin{document}

\title{Queue-Aware Energy-Efficient Joint Remote Radio Head Activation and Beamforming in Cloud Radio Access Networks}

\author{Jian~Li,~Jingxian~Wu,~\IEEEmembership{Senior
Member,~IEEE},~Mugen~Peng$^{\dagger}$,~\IEEEmembership{Senior
Member,~IEEE},~Ping~Zhang$^{\dagger}$,~\IEEEmembership{Senior
Member,~IEEE}
\thanks{Jian~Li (e-mail: {\tt lijian.wspn@gmail.com}), Mugen~Peng (e-mail:
{\tt pmg@bupt.edu.cn}), Ping~Zhang (e-mail:pzhang@bupt.edu.cn) are
with the Key Laboratory of Universal Wireless Communications for
Ministry of Education, Beijing University of Posts and
Telecommunications, Beijing, 100876, China. Jingxian~Wu (e-mail:
{\tt wuj@uark.edu}) is with the Department of Electrical
Engineering, University of Arkansas, Fayetteville, 72701, USA.}
\thanks{Part of this work has been presented at IEEE Global Communications Conference (GLOBECOM), San Diego, CA, USA, Dec. 2015.}}

\maketitle

\begin{abstract}
In this paper, we study the stochastic optimization of cloud radio access networks (C-RANs) by
joint remote radio head (RRH) activation and beamforming in the
downlink. Unlike most previous works that only consider a static
optimization framework with full traffic buffers, we formulate a
dynamic optimization problem by explicitly considering the effects
of random traffic arrivals and time-varying channel fading. The
stochastic formulation can quantify the tradeoff between power
consumption and queuing delay. Leveraging on the Lyapunov
optimization technique, the stochastic optimization problem can be
transformed into a per-slot penalized weighted sum rate maximization
problem, which is shown to be non-deterministic polynomial-time
hard. Based on the equivalence between the penalized weighted sum
rate maximization problem and the penalized weighted minimum mean
square error (WMMSE) problem, the group sparse beamforming
optimization based WMMSE algorithm and the relaxed integer
programming based WMMSE algorithm are proposed to efficiently obtain
the joint RRH activation and beamforming policy. Both algorithms can
converge to a stationary solution with low-complexity and can be
implemented in a parallel manner, thus
they are highly scalable to large-scale C-RANs. In addition, these two
proposed algorithms provide a flexible and efficient means to adjust
the power-delay tradeoff on demand.
\end{abstract}

\begin{IEEEkeywords}
Cloud radio access networks (C-RANs), Lyapunov optimization,
penalized weighted minimum mean square error (WMMSE),
Lagrangian dual decomposition.
\end{IEEEkeywords}

\section{Introduction}

{The fifth-generation (5G) wireless networks are expected to provide
ubiquitous services to a larger number of simultaneous mobile
devices with device density far beyond the current wireless
communication systems. To cope with these challenges, ultra-dense
low power nodes and cloud computing are regarded as two of the most
promising techniques{\cite{chiling}}.} Leveraged on low power node
and cloud computing, the cloud radio access network (C-RAN), first
proposed in{\cite{cmcc}}, is expected to revolutionize the
architecture and operations of future wireless systems, and it has
attracted considerable amount of attentions in both academia and
industry{\cite{5G1}\cite{CranMGMN}}. { As shown in Fig.
\ref{cran_rchitecture}, a large number of remote radio heads (RRHs)
are densely deployed in the space domain for C-RANs. Each RRH is
configured only with the front radio frequency (RF) components and
some basic transmission/reception functionalities. The RRHs are
connected to the baseband unit (BBU) pool through high-bandwidth and
low-latency fronthaul links to enable real-time cloud computing. The
C-RANs can act as a platform for the practical implementation of
coordinated multi-point (CoMP) transmission concepts{\cite{5G}}.
{Specifically, the BBU pool computes the beamforming weight
coefficients for different RRHs, and sends the precoded data to
various active RRHs. Then the active RRHs cooperatively transmit the
precoded data to different UEs.} The signals observed at each UE are
superpositions of signals from multiple active RRHs. The beamforming
weight coefficients are designed to steer the data to their intended
receivers in the spatial domain. That is, for a given UE, the
desired signals are combined coherently yet the interfering signals
are combined out-of-phase. Here the joint beamforming aims to
improve the signal-to-interference-plus-noise ratio (SINR) in order
to significantly improve the spectral efficiency of C-RANs.}

\begin{figure}
\centering
\includegraphics[scale=0.30]{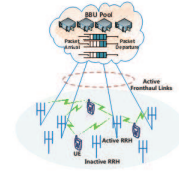}
\caption{Architecture of heterogeneous cloud radio access networks}
\label{cran_rchitecture}
\end{figure}

Various beamforming designs have been studied for CoMP in C-RANs
with different optimization objectives and constraints recently.
{In\cite{quek} and\cite{lau}, the number of active front haul links
is minimized under a SINR constraint for each user equipment (UE)
and a power constraint for each RRH. The problem is solved by
minimizing an approximate of the original combinatorial objective
function. The works in\cite{VNHa} and \cite{BDai} aim to jointly
optimize the set of RRHs serving each user and the corresponding
beamformers, under the constraint of front haul capacity.
Specifically, power minimization is studied in\cite{VNHa} and sum
rate maximization is considered in\cite{BDai}.} The design problems
in\cite{VNHa}\cite{BDai} fundamentally differ from that
in\cite{quek}\cite{lau} in that they explicitly consider the
fronthaul capacity. The problem of minimizing the overall power
consumption and CoMP operational costs by joint user association and
downlink beamforming was studied in\cite{MINP2}, where the problem
was addressed using a mixed integer second-order cone program
framework. Similar problem has been studied in\cite{SLuo}, where the
user association and beamforming were considered in both downlink
and uplink C-RANs. As existing solutions considering only the
downlink like\cite{MINP2} cannot be modified in a straightforward
way to solve the problem in\cite{SLuo}, efficient algorithms were
proposed utilizing the uplink-downlink duality result.

The dense deployment of RRHs imposes new technical challenges for
the design and implementation of large scale C-RANs. With the
centralized processing at the BBU pool, the power consumed by the
fronthaul links that provide high-capacity connections with BBU pool
becomes comparable to that for transmission{\cite{fronthaulpower}}.
Therefore, in order to reduce power consumption of the entire
network, we can reduce the number of active RRHs by putting some of
the RRHs into the sleep mode. The fronthaul links of sleeping RRH
will also be turned off to save power consumption. Therefore, the
scheduling of RRH activation plays a critical role in the
development of energy-efficient C-RANs. Related problem has been
studied in{\cite{groupgreen}}, where two efficient group sparse
beamforming algorithms were proposed to obtain the subset of active
RRHs and their corresponding beamformers. {The works
in\cite{BDaiJSAC} compares the energy efficiencies of two different
downlink transmission strategies in C-RANs by taking the RRH
transmission power, RRH activation power, and load-dependent
fronthaul power into considerations.} Compared with the optimal
exhaustive search method, the computational complexity of the
algorithms
in\cite{quek,lau,VNHa,BDai,MINP2,SLuo,groupgreen,BDaiJSAC} can be
significantly reduced, which, however, can still be very intensive
for large-scale C-RANs. This is due to the fact that a series of
convex problems (e.g. SDP, SOCP) have to be solved centrally using
standard CVX solvers. Furthermore, the aforementioned literatures
are typically based on snapshot-based static models, which indicates
that the stochastic and time-varying features are not considered
into the formulations. Therefore, only the physical layer
performance metrics such as power and throughput are optimized and
the resulting control policy is only adaptive to channel state
information (CSI). In practice, delay is also a key metric to
measure the quality-of-service (QoS), which has also been neglected
in these literatures. Intuitively, there is a fundamental tradeoff
between power consumption and queuing delay{\cite{cuiying}}, thus it
is important to jointly consider power consumption and delay to
balance their tradeoff and to meet various performance requirement
in C-RANs.

In practice, the stochastic control and delay analysis are usually
investigated from the queue stability perspective in a time-varying
system using the Lyapunov optimization technique. Many existing
literatures have focused on stochastic optimization for time-varying
wireless networks. A fundamental approach to stochastic resource
allocation and routing for heterogeneous data networks was presented
in\cite{MNeelyFair}, where the flow control is crucial to ensure no
network resources are wasted whenever the traffic rates are inside
or outside the capacity region. The authors of\cite{LongBLe}
investigated stochastic control for wireless networks with finite
buffers, where the joint flow control, routing, and scheduling
algorithms can achieve high network utility and deterministically
bounded backlogs inside the network. The delay analysis was
conducted in\cite{NeelyDelayAnalysis} for suboptimal scheduling in
one-hop wireless networks with general interference set constraints
and time-correlated traffic arrivals. {There also have been lots of
works that focus on optimizing power under queue stability and
interference constraints\cite{r1}\cite{r2}. However, these works
generally adopt highly simplified physical-layer models for wireless
channels, such as interference avoidance constraint or simple
channel-rate mapping function. They do not consider the complex
non-linear relationship between signal power, interference power,
and system throughput as in practical systems, which makes the
system design much more challenging. This paper fundamentally
differs from previous works in that we consider a power minimization
problem by designing queue-aware joint optimization algorithms for
C-RANs where both RRH activation set and beamforming vector are
adaptive to queue state information and channel state information.
In addition, the non-linear impacts of interence are explicitly
quantified during the system design. Therefore, existing solutions
can not be readily applied to the C-RAN setting considered in this
paper.}

Motivated by these facts, we propose to study dynamic joint RRH
activation and beamforming for C-RANs by considering random traffic
arrivals, queuing delays, and the time-varying fading channels. To
the best of our knowledge, this is the first such work in C-RANs.
Unlike the static optimization problems studied in the literatures,
the problems considered in this paper are formulated as stochastic
optimizations, which are notoriously difficult to solve but are
important for practical systems. The main contributions of this
paper are summarized as follows:

\begin{itemize}
{
\item The stochastic optimization of joint RRH activation and
beamforming is developed for practical C-RANs. A wide range of
detailed operations and constraints, such as beamforming, RRH
activation, time-varying channel, random traffic arrivals, and
fronthaul power consumption, are considered in the formulation. To
the best of our knowledge, this is the first paper that achieves
stochastic control of power and delay by considering realistic
system operations across multiple protocol layers.

\item To tackle the NP-hardness of the optimization problem, two
low-complexity algorithms are proposed using the group sparse
beamforming (GSB) approach and the relaxed integer programming (RIP)
approach, respectively. Both algorithms can be implemented in a
parallel manner with closed-form expressions, thus they are scalable
to large-scale C-RANs.

\item The delay and power performance of the two
proposed algorithms are numerically evaluated. Significant
performance gains are achieved by the proposed algorithms due to the
fact that they are adaptive to the queue state information. The
proposed algorithms can provide a flexible and efficient means to
control the delay-power tradeoff on demand. All these make the
proposed algorithms attractive and useful in practical
applications.}
\end{itemize}

%
%

The remainder of this paper is organized as follows. In Section II,
we introduce the system model and formulate the stochastic
optimization problem. In Section III, the Lyapunov optimization is
introduced and utilized to obtain a penalized weighted sum rate
maximization problem for each slot. The optimization problem is
solved by two efficient algorithms based on GSB approach and RIP
approach in Section IV and Section V, respectively. Numerical
results are presented in Section VI. Finally, we conclude our paper
in Section VII. The acronyms used in this paper are summarized in
Table I.

\begin{table}[ht]
\centering \caption{{Summary of Acronyms}} {
\begin{tabular}{m{1.1cm}|m{5.4cm}}
\hline
\textbf{Acronym} & \textbf{Description}  \\
\hline C-RAN & cloud radio access network \\
BBU & baseband unit \\
RRH & remote radio head \\
MSE & mean square error \\
MMSE & minimum mean square error \\
WMMSE & weighted minimum mean square error\\
GSB & group sparse beamforming\\
RIP & relaxed integer programming \\
BCD & block coordinate descent\\
LASSO & least absolute shrinkage and selection operator\\
ADMM & alternating direction method of multipliers\\
FJP & full joint processing \\
\hline
\end{tabular}}
\end{table}

Throughout this paper, lower-case bold letters denote vectors and
and upper-case bold letters denote matrices. $\bf{I}$ denotes
identity matrix. $\mathbb{C}$ denotes complex domain and the complex
Gaussian distribution with mean ${\bf m}$ and covariance matrix ${\bf R}$
is represented by $\mathcal {CN}({\bf m} , {\bf R}
)$. $\mathbb{E}[\cdot ]$ and $\rm{det}(\cdot)$ represent expectation
and determinant operators, respectively. ${\rm{Re}}\{\cdot \}$
is the real part operator. $||\cdot||_p$ denotes
$\ell_p$-norm of a vector. The inverse, transpose, conjugate
transpose operators are denoted as $(\cdot)^{-1}$, $(\cdot)^T$, $(\cdot)^H$,
respectively.

\section{system model}


\subsection{Scenario Description}

We consider a downlink C-RAN with $K$ RRHs and $I$ UEs, where each
RRH is equipped with $M$ antennas and each UE has
$N$ antennas. Let $\mathcal {K}$ and $ \mathcal {I}$ denote the set
of RRHs and the set of UEs, respectively. The bandwidth of the system is
$W$. We also assume that the network operates in slotted time with
time dimension partitioned into decision slots indexed by $t \in \{
0,1,2,...\}$ with slot duration $\tau$.

Let ${\bf{H}}_{ki}(t) \in {\mathbb{C}^{N \times M}}$ denote the
channel state information (CSI) matrix from RRH $k$ to UE $i$ at
slot $t$, let ${{\bf{H}}_i}(t) =
{[{\bf{H}}_{1i}(t),{\bf{H}}_{2i}(t),...,{\bf{H}}_{Ki}(t)]} \in
{\mathbb{C}^{N \times MK}}$ denote the CSI matrix from all RRHs to
UE $i$ at slot $t$, and let ${\bf{H}}(t) = [{{\bf{H}}_i}(t), ...,
{{\bf{H}}_I}(t)] \in \mathbb{C}^{N\times MIK}$ denote the network CSI at slot $t$. The channel
is assumed to follow quasi-static block fading, where
each element of ${\bf{H}}(t)$ keeps constant for the duration of a
slot, but is identically and independently distributed (i.i.d.) across
different slots. Let ${\bf{w}}_{ki}(t) \in {\mathbb{C}}^{M \times
1}$ denote the beamforming vector at RRH $k$ for UE $i$ at slot $t$,
let ${\bf{w}}_i(t) = [{\bf{w}}_{1i}^T(t),...,{\bf{w}}_{Ki}^T(t)]^T
\in {\mathbb{C}^{{MK} \times 1}}$ denote the aggregated beamformer
for UE $i$ at slot $t$, let ${{\bf{\tilde w}}}_k(t) =
[{\bf{w}}_{k1}^T(t),...,{\bf{w}}_{kI}^T(t)]^T \in {\mathbb{C}^{MI
\times 1}}$ denote the aggregated beamformer used by RRH $k$ at slot
$t$, and let ${\bf{w}}(t) = [{{\bf{\tilde w}}}_1^T(t),...,
{{\bf{\tilde w}}}_K^T(t)]^T \in {\cal C}^{MIK \times 1}$ denote the aggregated beamformer of the
entire network at slot $t$.

Assume that each UE has its own data stream. Let $a_i(t)$ denote the
data message for UE $i$ at slot $t$. Without loss of generality, we
further assume that $\mathbb{E}[a_i^2(t)] = 1$ and each $a_i(t)$ is
i.i.d. among UEs. With linear beamforming operated centrally in the
BBU pool, the baseband signal to be transmitted by RRH $k$ at slot
$t$ is
\begin{equation}
{{\bf{x}}_k}(t) = \sum\limits_{i \in \mathcal {I}}
{{{\bf{w}}_{ki}(t)}{a_i(t)}}.
\end{equation}

The encoded baseband signal ${{\bf{x}}_k}(t)$ is delivered to RRH
$k$ for radio transmission through corresponding fronthaul link. {It
is worth noting that as we focus on the issue of power-delay
tradeoff in this paper, we assume that the fronthaul links are
provisioned with sufficiently high capacity and negligible
latency.}\footnote{{The impact of finite fronthaul capacity on
fronthaul compression and quantization in C-RANs has been
investigated recently in\cite{park}, which does not consider the
power-delay tradeoff. It is expected that finite fronthaul capacity
will negatively affect the performance of the proposed algorithm.}}
The signal observed by each UE is the superposition of signals from
all RRHs. The received signal at UE $i$ is given by
\begin{equation}
{{\bf{r}}_i}(t) = {{\bf{H}}_i}(t){{\bf{w}}_i}(t){a_i}(t) +
\sum\limits_{j \ne i} {{{\bf{H}}_i}(t){\bf{w}}_j{{(t)}}{{a}_j}(t)} +
{{\bf{z}}_i}(t),
\end{equation}
where ${{\bf{z}}_i}(t) \in {\mathbb{C}^{N \times 1}}$ is the
additive white Gaussian noise (AWGN) at slot $t$ with distribution
$\mathcal {CN}(0, \sigma ^2{{\bf{I}}})$. We assume that all the UEs
adopt single user detection and the interference is treated as
noise. The achievable data rate in the unit of bps/Hz of UE $i$ is
given by
\begin{equation}
\begin{array}{l}
{R_i(t)} = \log_2 \det ({{\bf{I}}} +
{{\bf{H}}_i(t)}{{\bf{w}}_i(t)}{\bf{w}}_i^H(t){\bf{H}}_i^H(t)\\
~~~~~~~~~~(\sum\limits_{j \ne i} {{{\bf{H}}_i(t)}{{\bf{w}}_j(t)}{\bf
w}_j^H(t){\bf{H}}_i^H(t) + {\sigma ^2}} {{\bf{I}}})^{-1}).
\end{array}
\end{equation}

\subsection{Network Power Consumption Model}

In C-RANs, the extensive use of high-capacity low-latency fronthaul
links makes the fronthaul power consumption comparable to the
transmission power of RRHs{\cite{powermodel}}. Here we consider the
passive optical network to provide the effective high-capacity
fronthaul connections between the RRHs and the BBU pool. The passive
optical network consists of optical network units and an optical
line terminal that connects a set of associated optical network
units through a single optical fiber{\cite{PON}}. From the
perspective of energy saving, some RRHs and their associated optical
network units can be switched into sleep mode with negligible power
consumption, but the optical line terminal with constant power
consumption $P_{\rm{OLT}}$ cannot go into sleep mode as it plays the
roles of distributer, aggregator and arbitrator of the transport
network. Here we ignore $P_{\rm{OLT}}$ because it is a constant and
will not affect the scheduling and optimization results. Let
$P_{k}^{\rm{ONU}}$ denote the constant power consumed by the optical
network unit associated with active RRH $k$. Besides, due to the
real-time A/D and D/A processing at each RRH, static circuit power
$P_{k}^{\rm{s}}$ is also consumed. Thus, the amount of static power
consumption associated with RRH $k$ during active mode is
${P_{k}^{\rm{c}}} = {P_{k}^{\rm{s}}} + {P_{k}^{\rm{ONU}}}$. When RRH
$k$ and its corresponding fronthaul link are switched into the sleep
mode, there is no static power consumption. Let $\mathcal {A}(t)
\subseteq \mathcal {K}$ denote the set of active RRHs at slot $t$.
The network power consumption at slot $t$ is given by
\begin{equation}
p(\mathcal {A}(t), {\bf{w}}(t)) = \sum\limits_{k \in \mathcal
{A}(t)} {\left( {\frac{1}{{{\eta _k}}}||{{{\bf{\tilde
w}}}_k}(t)||_2^2 + {P_k^c} } \right)},\label{networkpower}
\end{equation}
where ${\eta _k}$ is the drain efficiency of RF power amplifier at
RRH $k$. { Note that the load-dependent fronthaul power consumption
model has been considered in\cite{BDaiJSAC}, while its impact on our
formulation will be left for future study.} The network power
consumption is a random process, in that it depends on the policy of
RRH activation set and corresponding beamforming vectors, which is
dynamically determined with the observation of traffic queues and
channel conditions at each slot.

\subsection{Queue Stability and Problem Formulation}

The BBU pool maintains $I$ traffic queues for the random traffic
arrivals towards $I$ UEs. Let ${\bf{A}}(t) = [ {A_1}(t)
,...,{A_I}(t)]$ be the vector of stochastic traffic data arrivals
(bits) at the end of slot $t$. We assume that the traffic arrival
${A_i}(t)$ is independent w.r.t. $i$ and i.i.d. over slots according
to a general distribution with mean $\mathbb{E}[{A_i}(t)] = {\lambda
_i}$. Let ${\bf{Q}}(t) = [Q_1(t),...,Q_I(t)]$ denote the vector of
queue state information (QSI) (bits) for the $I$ UEs at the
beginning of slot $t$. Therefore, the queue dynamic for UE $i$ is
given by
\begin{equation}
{Q_i}(t + 1) = \max [{Q_i}(t) - \mu _i(t),0] +
{A_i}(t),\label{qdynamic}
\end{equation}
where the amount of traffic departure at slot $t$ is given by $\mu
_i(t) = W\tau{R_i(t)}$.

To model the impacts of joint RRH activation and beamforming policy
on average queue delay and average network power consumption, we
first present the definitions of queue stability, stability region
and throughput optimal policy as follows{\cite{neely}}.
\begin{definition}[Queue Stability]
\emph{A discrete time queue $Q(t)$ is strongly stable if
\begin{equation}
\mathop {\lim \sup }\limits_{T \to \infty }
\frac{1}{T}\sum\limits_{t = 0}^{T - 1} {\mathbb{E}[Q(t)]} < \infty.
\end{equation}
Furthermore, a network of queues is stable if all individual queues
of the network are stable.}
\end{definition}

\begin{definition}(\emph{Stability Region and Throughput-Optimal Policy}):
\emph{The stability region $\mathcal {C}$ is the closure of the set
of all the arrival rate vectors ${\bm{\lambda}} = \{\lambda _i : i
\in \mathcal {I}\}$ that can be stabilized in a C-RAN. A
throughput-optimal resource optimization policy is a policy that
stabilizes all the arrival rate vectors $\{\lambda _i : i \in
\mathcal {I}\}$ within the stability region $\mathcal {C}$.}
\end{definition}

The objective is to simultaneously maintain the network queue
stability and minimize the network power consumption, by using joint
RRH activation and beamforming. The problem can be formulated as the
following stochastic optimization problem:
\begin{equation}
\begin{array}{ll}
 \min. & \bar {p} = \mathop {\lim }\limits_{T \to \infty } \frac{1}{T}\sum\limits_{t = 0}^{T - 1} {\mathbb{E}[p(\mathcal {A}(t), {\bf{w}}(t))]}\\
 {\rm{s.t.}}& {\rm{C1}:} {\rm{Queue~}}{Q_i}(t){\rm{~is~strongly~stable,~}}\forall i, \\
 & {\rm{C2}:}||{{{\bf{\tilde w}}}_k}||_2^2 \le {P_k}, \\
 \end{array}\label{stochasticproblem}
\end{equation}
{where the expectation $\mathbb{E}$ is taken with respect to the
distribution of network power consumption, which depends on the
random RRH activation set and beamforming vectors.} C1 is the
network stability constraint to guarantee a finite queue length for
each queue. C2 is the constraint on the instantaneous per-RRH power
consumption. In practical C-RANs, the random traffic arrivals and
the time-varying channel conditions are generally unpredictable. The
stochastic nature of the channel conditions and traffic arrivals
makes it impractical to calculate the optimal solution in an offline
manner. To address this problem, we will resort to Lyapunov
optimization, which can transform the stochastic optimization
problem (\ref{stochasticproblem}) into a deterministic one at each
slot.

\begin{remark}
The queue stability constraint is used to depict and control the
average delay. According to Definition 1, the queue stability is
guaranteed if the average queue length is finite. Note that average
delay is proportional to average queue length for a given traffic
arrival rate from Little's Theorem. {As suggested later in Section
III, the average queue length can be arbitrarily bounded by choosing
an appropriate control parameter.}
\end{remark}

\section{Problem Transformation Based on Lyapunov Optimization}

In this section, we will exploit the framework of Lyapunov
optimization to solve the stochastic optimization problem in
(\ref{stochasticproblem}). Define the quadratic Lyapunov function as
$L({\bf{Q}}(t)) = \frac{1}{2}\sum\limits_{i \in \mathcal {I}}
{{Q_i}{{(t)}^2}}$, which serves as a scalar metric of queue
congestion in the C-RAN. To keep the system stable by persistently
pushing the Lyapunov function towards a lower congestion state, the
one-step conditional Lyapunov drift is defined as
\begin{equation}
\Delta ({\bf{Q}}(t)) = \mathbb{E}[ L({\bf{Q}}(t + 1)) -
L({\bf{Q}}(t))|{\bf{Q}}(t)],
\end{equation}
{where $\mathbb{E}$ is the conditional expectation taken with
respect to the distribution of Lyapunov drift given queue state
${\bf{Q}}(t)$.} The Lyapunov drift-plus-penalty function is defined
as
\begin{equation}
\Delta ({\bf{Q}}(t)) + V\mathbb{E}[ p(\mathcal {A} (t),
{\bf{w}}(t))|{\bf{Q}}(t)],\label{driftpenalty}
\end{equation}
{where $\mathbb{E}$ is the conditional expectation taken with
respect to the distribution of network power consumption given queue
state ${\bf{Q}}(t)$,} and $V > 0$ represents an arbitrary control
parameter. The parameter $V$ can be used to control the power-delay
tradeoff. A larger $V$ means more emphasis will be put on power
minimization during the optimization. On the other hand, when $V$ is
small, queue stability carries more weight during the optimization.
Suppose that the expectation of the penalty process $ p(\mathcal {A}
(t), {\bf{w}}(t))$ is deterministically bounded by some finite
constant $p_{\min}$, $p_{\max}$, i.e. ${p_{\min }} \le \mathbb{E}[
p(\mathcal {A} (t), {\bf{w}}(t))] \le {p_{\max }}$. Let $p^*$ denote
the theoretical optimal value of (\ref{stochasticproblem}), then the
relationship between the Lyapunov drift-plus-penalty function and
queue stability is established in Theorem 1{\cite{neely}}.

\begin{theorem}[Lyapunov Optimization]
\emph{Suppose there exist positive constants $B$, $\epsilon$ and $V$
such that for all slots $t \in \{ 0,1,2,...\}$ and all possible
values of ${\bf{Q}}(t)$, the Lyapunov drift-plus-penalty function
satisfies:
\begin{equation}
\Delta ({\bf{Q}}(t)) + V\mathbb{E}[p(\mathcal {A}(t),
{\bf{w}}(t))|{\bf{Q}}(t)] \le B + V{p^*} - \epsilon \sum\limits_{i =
1}^I {{Q_i}(t )},
\end{equation}
then all queues ${Q}_i(t)$ are strongly stable. The average queue
length satisfies
\begin{equation}
\mathop {\lim \sup }\limits_{T \to \infty }
\frac{1}{T}\sum\limits_{t = 0}^{T - 1} {\sum\limits_{i = 1}^I
{\mathbb{E}[ {Q_i}(t )] } } \le \frac{{B + V({p^*} - {p_{min
}})}}{\epsilon },
\end{equation}
and the average penalty of power consumption satisfies
\begin{equation}
\mathop {\lim \sup }\limits_{T \to \infty }
\frac{1}{T}\sum\limits_{t = 0}^{T - 1} {\mathbb{E}[ p(\mathcal
{A}(t), {\bf{w}}(t))] } \le {p^*} + \frac{B}{V}.
\end{equation}}
\end{theorem}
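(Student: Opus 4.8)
The plan is to establish the three conclusions—strong stability, the average-queue-length bound, and the average-power bound—from the single drift-plus-penalty hypothesis by a telescoping-sum argument over a finite horizon $T$, followed by taking $T \to \infty$. The starting point is the assumed inequality
\begin{equation}
\Delta({\bf Q}(t)) + V\mathbb{E}[p(\mathcal{A}(t),{\bf w}(t))|{\bf Q}(t)] \le B + Vp^* - \epsilon \sum_{i=1}^I Q_i(t). \nonumber
\end{equation}
First I would take the unconditional expectation of both sides (using the tower property $\mathbb{E}[\mathbb{E}[\cdot|{\bf Q}(t)]] = \mathbb{E}[\cdot]$), so that the conditional drift $\Delta({\bf Q}(t)) = \mathbb{E}[L({\bf Q}(t+1)) - L({\bf Q}(t))|{\bf Q}(t)]$ becomes the unconditional one-step difference $\mathbb{E}[L({\bf Q}(t+1))] - \mathbb{E}[L({\bf Q}(t))]$. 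Summing this expectation over $t = 0,1,\ldots,T-1$ makes the Lyapunov-function terms telescope, leaving only $\mathbb{E}[L({\bf Q}(T))] - \mathbb{E}[L({\bf Q}(0))]$ on one side.

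The key algebraic step is then to rearrange the telescoped inequality into the two forms needed. For the queue-length bound, I would drop the nonnegative term $V\sum_t \mathbb{E}[p(\cdot)]$ (or lower-bound it by $Vp_{\min}T$) and the nonnegative term $\mathbb{E}[L({\bf Q}(T))] \ge 0$, isolate $\epsilon \sum_{t=0}^{T-1}\sum_i \mathbb{E}[Q_i(t)]$ on one side, divide through by $\epsilon T$, and use $L({\bf Q}(0)) < \infty$ so that the $\mathbb{E}[L({\bf Q}(0))]/(\epsilon T)$ term vanishes as $T\to\infty$. This yields exactly the bound $\frac{B + V(p^* - p_{\min})}{\epsilon}$ and, because the right-hand side is finite, also establishes strong stability per Definition 1. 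For the power bound, I would instead drop the nonnegative penalty term $\epsilon\sum_i Q_i(t)$ from the right-hand side, keep the power sum, divide by $VT$, and again discard the vanishing initial-Lyapunov term to obtain $p^* + B/V$.

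The main obstacle—or rather the conceptual crux that makes the theorem a genuine statement rather than a tautology—is the appearance of the optimal value $p^*$ in the hypothesis. The hypothesis is essentially assumed here (it is the drift condition one must later verify for the specific C-RAN policy), so within the proof of this theorem the work is purely the telescoping and the careful handling of signs when discarding nonnegative terms; I must be consistent about which nonnegative quantity is dropped in each of the two bounds, since dropping the wrong term would break the corresponding inequality. The genuinely delicate point, which I would flag, is justifying the interchange of limit and expectation and confirming that $\mathbb{E}[L({\bf Q}(0))]$ is finite (guaranteed if the system starts with finite queues), so that the $T\to\infty$ limits are legitimate. Beyond that, the argument is a standard application of Lyapunov drift analysis and follows the template of \cite{neely}.
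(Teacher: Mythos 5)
Your proof is correct and is precisely the standard telescoping drift-plus-penalty argument of Theorem 4.2 in Neely's book, which is exactly what the paper invokes (its "proof" is only a citation to that result). The handling of the two bounds—dropping $\mathbb{E}[L({\bf Q}(T))]\ge 0$ together with either the $p_{\min}$ lower bound on the penalty or the nonnegativity of $\epsilon\sum_i Q_i(t)$—matches the reference argument, so there is nothing to add.
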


\begin{proof}
The proof can follow that for Theorem 4.2 in\cite{neely}.
\end{proof}

{The results in Theorem 1 motivate us to minimize the Lyapunov
drift-plus-penalty in (\ref{driftpenalty}) to achieve the maximum
queue stability region and obtain throughput-optimal policy. Rather
than directly minimize (\ref{driftpenalty}), our policy actually
seeks to minimize the upper bound of (\ref{driftpenalty}), which is
given by the following lemma [Lemma 4.6 of 23]}.

\begin{lemma}[Upper Bound of Lyapunov Drift-plus-penalty]
\emph{Under any control policy, the drift-plus-penalty has the
following upper bound for all $t$, all possible values of
$\bf{Q}(t)$ and all parameters $V > 0$,
\begin{equation}
\begin{array}{l}
\Delta ({\bf{Q}}(t)) \!+\!\! V\mathbb{E}[p(\mathcal {A}(t),
{\bf{w}}(t))|{\bf{Q}}(t)] \le B +
\\V\mathbb{E}[ p(\mathcal {A}(t),
{\bf{w}}(t))|{\bf{Q}}(t)] + \sum\limits_{i \in \mathcal {I}}
{Q_i}(t)\mathbb{E}[{A_i}(t) - \mu _i(t)|{\bf{Q}}(t)],
\end{array}\label{driftpenaltybound}
\end{equation}
{where $B$ is a positive constant and for all slot $t$ satisfies $B
\ge \frac{1}{2}\sum\limits_{i = 1}^I {\mathbb{E}[{{ {A_i^2}(t)}} +
{{{\mu_i^2}(t)}}|{\bf{Q}}(t)]}$.}}
\end{lemma}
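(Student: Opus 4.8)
The plan is to prove the bound in (\ref{driftpenaltybound}) by first bounding the pure Lyapunov drift $\Delta({\bf{Q}}(t))$, since the penalty term $V\mathbb{E}[p(\mathcal{A}(t),{\bf{w}}(t))|{\bf{Q}}(t)]$ appears identically on both sides and therefore cancels. Thus the entire content reduces to establishing
\[
\Delta({\bf{Q}}(t)) \le B + \sum_{i\in\mathcal{I}} Q_i(t)\,\mathbb{E}[A_i(t)-\mu_i(t)\,|\,{\bf{Q}}(t)].
\]

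First I would start from the per-queue update (\ref{qdynamic}) and square it. The central algebraic step is the elementary inequality that for any $Q\ge 0$, $\mu\ge 0$, $A\ge 0$,
\[
\bigl(\max[Q-\mu,0]+A\bigr)^2 \le Q^2 + \mu^2 + A^2 + 2Q(A-\mu),
\]
which follows from $\max[Q-\mu,0]^2 \le (Q-\mu)^2$ together with $\max[Q-\mu,0] \le Q$ (so that $2\,\max[Q-\mu,0]\,A \le 2QA$). Applying this to each UE $i$, subtracting $Q_i(t)^2$, summing over $i\in\mathcal{I}$, and multiplying by $\tfrac{1}{2}$ yields a pathwise bound on $L({\bf{Q}}(t+1))-L({\bf{Q}}(t))$. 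Taking the conditional expectation given ${\bf{Q}}(t)$ then produces $\Delta({\bf{Q}}(t))$ on the left and, on the right, two groups of terms: the second-moment contributions $\tfrac{1}{2}\sum_i \mathbb{E}[A_i^2(t)+\mu_i^2(t)\,|\,{\bf{Q}}(t)]$ and the weighted drift $\sum_i Q_i(t)\,\mathbb{E}[A_i(t)-\mu_i(t)\,|\,{\bf{Q}}(t)]$, where $Q_i(t)$ is deterministic under the conditioning and can be pulled outside the expectation.

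The main obstacle is establishing that the second-moment term can be absorbed into a finite constant $B$. The departure $\mu_i(t)=W\tau R_i(t)$ is bounded because the per-RRH power constraint C2 caps the achievable rate $R_i(t)$, so $\mu_i^2(t)$ is uniformly bounded; finiteness of $\mathbb{E}[A_i^2(t)\,|\,{\bf{Q}}(t)]$ follows from the mild assumption that the i.i.d. arrivals have finite second moment, implicit in the traffic model. Given these bounds, any $B\ge\tfrac{1}{2}\sum_i\mathbb{E}[A_i^2(t)+\mu_i^2(t)\,|\,{\bf{Q}}(t)]$ is finite and uniform in $t$. Adding $V\mathbb{E}[p(\mathcal{A}(t),{\bf{w}}(t))\,|\,{\bf{Q}}(t)]$ to both sides then recovers exactly (\ref{driftpenaltybound}), completing the argument.
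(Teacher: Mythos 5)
Your proposal is correct and follows essentially the same route as the paper's Appendix A: squaring the queue update (\ref{qdynamic}) via the elementary inequality $(\max[a-b,0]+c)^2 \le a^2+b^2+c^2-2a(b-c)$, summing over $i$, taking the conditional expectation, and adding the penalty term to both sides. The only additions are your explicit derivation of that inequality and the brief justification that $B$ is finite, both of which the paper leaves implicit.
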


\begin{proof}
The proof is in Appendix A.
\end{proof}

{By the principle of \emph{opportunistically minimizing an
expectation}{\cite{neely}}, the policy that minimizes
$\mathbb{E}[f(t)|{\bf{Q}}(t)]$ is the one that minimizes $f(t)$ with
the observation of ${\bf{Q}}(t)$. Besides, neither $\sum\limits_{i
\in \mathcal {I}}Q_i(t)A_i(t)$ nor $B$ in (\ref{driftpenaltybound})
will be affected by the policy at slot $t$. Therefore,} the
optimization problem can be simplified to
\begin{equation}
\mathop {\max. }\limits_{\mathcal {A}(t),{\bf{w}}(t)}\sum\limits_{i
\in \mathcal {I}} {Q_i(t)}{\mu _i(t)} - V p(\mathcal {A}(t),
{\bf{w}}(t)).\label{P1}
\end{equation}

{The following theorem justifies the throughput optimality by
solving problem (\ref{P1}) optimally.

\begin{theorem}
The RRH activation $\mathcal {A}(t)$ and beamforming ${\bf{w}}{(t)}$
given by solving (\ref{P1}) optimally achieves the maximum stability
region $\mathcal {C}$ in C-RANs.
\end{theorem}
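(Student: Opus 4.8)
The plan is to establish throughput optimality through the canonical drift-comparison argument of Lyapunov optimization, exploiting the fact that solving (\ref{P1}) optimally is precisely equivalent to minimizing the right-hand side of the drift-plus-penalty bound in Lemma 1 over all admissible control actions at slot $t$ (since $B$ and the term $\sum_{i} Q_i(t) A_i(t)$ are unaffected by the policy).

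First I would characterize the stability region via stationary randomized policies. For any arrival rate vector $\bm{\lambda}$ lying strictly in the interior of $\mathcal{C}$, there exist a constant $\epsilon > 0$ and a stationary randomized policy $\Pi^{*}$ that selects $(\mathcal{A}(t), \mathbf{w}(t))$ as a function of the channel realization $\mathbf{H}(t)$ alone, independently of the queue backlog, such that the induced service rates satisfy $\mathbb{E}[\mu_i^{\Pi^{*}}(t)] \ge \lambda_i + \epsilon$ for every $i \in \mathcal{I}$, with the associated power expectation bounded above by $p_{\max}$. This is justified because, by Definition 2, $\mathcal{C}$ is the closure of the set of stabilizable arrival rates, and the feasible action space---bounded per-RRH power (C2) over a finite RRH set together with a finite activation set---is compact, so such an $\epsilon$-slack policy exists.

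Next I would substitute $\Pi^{*}$ into the bound (\ref{driftpenaltybound}). Because our policy minimizes the right-hand side over all feasible actions and $\Pi^{*}$ is one feasible choice, replacing our action by $\Pi^{*}$ can only enlarge the bound. Since $\Pi^{*}$ is queue-independent, each conditional expectation collapses to $\mathbb{E}[A_i(t) - \mu_i^{\Pi^{*}}(t) \mid \mathbf{Q}(t)] = \lambda_i - \mathbb{E}[\mu_i^{\Pi^{*}}(t)] \le -\epsilon$, yielding
\[
\Delta(\mathbf{Q}(t)) + V \mathbb{E}[p(\mathcal{A}(t), \mathbf{w}(t)) \mid \mathbf{Q}(t)] \le B + V p_{\max} - \epsilon \sum_{i \in \mathcal{I}} Q_i(t).
\]
This is exactly the hypothesis of Theorem 1 (with the finite constant $B + V p_{\max}$ playing the role of $B + V p^{*}$), so Theorem 1 guarantees that every queue $Q_i(t)$ is strongly stable; hence the policy stabilizes $\bm{\lambda}$.

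Since the argument holds for every $\bm{\lambda}$ interior to $\mathcal{C}$, the policy stabilizes all arrival rate vectors in $\mathcal{C}$, and by Definition 2 it is therefore throughput-optimal and attains the maximum stability region $\mathcal{C}$. The main obstacle I anticipate is rigorously producing the $\epsilon$-optimal stationary policy $\Pi^{*}$: this requires a convex-hull (Carath\'eodory-type) representation of $\mathcal{C}$ as the set of time-average service-rate vectors achievable by single-slot, channel-only decisions, together with an argument that the relevant supremum is attained over the compact action space. This is the step that links the abstract stability region to a concrete comparison policy and underpins the entire drift argument; the rest is a mechanical invocation of Lemma 1 and Theorem 1.
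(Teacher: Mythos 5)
Your proposal is correct and follows essentially the same route as the paper's proof in Appendix B: invoke the existence of a queue-independent stationary randomized policy with $\epsilon$-slack service rates (the paper cites Theorem 4.5 of Neely for this, where you sketch the compactness/Carath\'eodory justification), note that solving (\ref{P1}) minimizes the right-hand side of the bound in Lemma 1 so the comparison policy can only enlarge it, and conclude strong stability via Theorem 1. The only cosmetic difference is that you bound the comparison policy's power by $p_{\max}$ where the paper writes $\bar p(\epsilon)$; both yield the same finite-drift conclusion.
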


\begin{proof}
The proof is in Appendix B.
\end{proof}
}

{However, the weighted sum rate term in (\ref{P1}) is nonconvex and
is shown to be NP-hard in wireless networks with
interference\cite{ZLuoSzhang}. It is thus extremely difficult, if
not impossible, to get the globally optimal solution to (\ref{P1})
through efficient algorithms in polynomial time. Rather than seeking
global optimality, we will focus on developing low-complexity
algorithms that produce suboptimal solutions to (\ref{P1}).} The
following theorem characterizes the performance of (\ref{P1}) under
suboptimal solutions.

{
\begin{theorem}
Let $\phi$ and $C$ be constants such that $0 < \phi \le 1$ and $C
\ge 0$. Suppose there is an $\epsilon >0 $, such that
\begin{equation}
{\bm{\lambda}} + \epsilon {\bf{1}} \in \phi\mathcal {C}.
\end{equation}

If the suboptimal solution makes (possibly randomized) decisions
every slot to satisfy
\begin{equation}
\sum\limits_{i = 1}^I {{Q_i}(t)\mathbb{E}[{\mu _i}(t)|{\bf{Q}}(t)]}
\ge \phi \left( {\max \sum\limits_{i = 1}^I {{Q_i}(t){\mu _i}} }
\right) - C,\label{scaledproblem}
\end{equation}
then the network is strongly stable.
\end{theorem}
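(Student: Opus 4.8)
The plan is to reduce the claim to the standard Lyapunov drift condition by inserting the $\phi$-approximate decision rule into the drift bound of Lemma 1 and then exploiting the capacity-region hypothesis through a stationary randomized reference policy. First I would start from the drift-plus-penalty bound (\ref{driftpenaltybound}); since only stability is at stake here, I would cancel the common penalty term $V\mathbb{E}[p(\mathcal{A}(t),\mathbf{w}(t))|\mathbf{Q}(t)]$ appearing on both sides to isolate the pure drift
\[
\Delta(\mathbf{Q}(t)) \le B + \sum_{i\in\mathcal{I}} Q_i(t)\,\mathbb{E}[A_i(t)-\mu_i(t)\,|\,\mathbf{Q}(t)].
\]
Because the arrivals are i.i.d.\ and independent of the queue state, $\mathbb{E}[A_i(t)|\mathbf{Q}(t)]=\lambda_i$, so the bound becomes $\Delta(\mathbf{Q}(t)) \le B + \sum_i Q_i(t)\lambda_i - \sum_i Q_i(t)\mathbb{E}[\mu_i(t)|\mathbf{Q}(t)]$.

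Next I would bring in the $\phi$-approximation property (\ref{scaledproblem}) to lower bound the realized weighted service rate, obtaining
\[
\Delta(\mathbf{Q}(t)) \le B + C + \sum_i Q_i(t)\lambda_i - \phi\,\mathbb{E}\Big[\max\sum_i Q_i(t)\mu_i \,\Big|\, \mathbf{Q}(t)\Big].
\]
The crucial step is to control the maximum weighted sum rate from below. Here I would invoke the hypothesis $\bm{\lambda}+\epsilon\mathbf{1}\in\phi\mathcal{C}$, equivalently $(\bm{\lambda}+\epsilon\mathbf{1})/\phi\in\mathcal{C}$, which by the definition of the stability region guarantees the existence of a stationary randomized policy whose decisions depend only on the channel realization $\mathbf{H}(t)$ and whose expected service rates satisfy $\mathbb{E}[\mu_i^{\mathrm{stat}}(t)]\ge(\lambda_i+\epsilon)/\phi$. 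Since the per-slot maximizer dominates any feasible decision pointwise in the channel, taking the conditional expectation yields $\mathbb{E}[\max\sum_i Q_i(t)\mu_i|\mathbf{Q}(t)]\ge \sum_i Q_i(t)\mathbb{E}[\mu_i^{\mathrm{stat}}(t)]\ge \sum_i Q_i(t)(\lambda_i+\epsilon)/\phi$.

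Substituting this bound, the factor $\phi$ cancels exactly and the arrival-rate terms $\sum_i Q_i(t)\lambda_i$ cancel as well, leaving the clean Lyapunov drift inequality
\[
\Delta(\mathbf{Q}(t)) \le (B+C) - \epsilon\sum_{i=1}^I Q_i(t).
\]
This is precisely the hypothesis of Theorem 1 with $V=0$ and constant $B+C$, so I would conclude strong stability directly, together with the explicit bound $\limsup_{T\to\infty}\frac{1}{T}\sum_{t=0}^{T-1}\sum_i\mathbb{E}[Q_i(t)]\le (B+C)/\epsilon$. I expect the main obstacle to be the rigorous justification of the stationary-policy lower bound on the maximum weighted sum rate---in particular, arguing that membership in the scaled region $\phi\mathcal{C}$ yields a channel-only randomized policy achieving the scaled rates $(\lambda_i+\epsilon)/\phi$, and carefully handling the interchange of the per-slot maximization with the conditional expectation. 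Once that reference policy is in hand, the cancellation of $\phi$ renders the remainder routine.
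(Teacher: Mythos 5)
Your proof is correct and is essentially the argument the paper defers to: the paper gives no proof of its own, simply citing Theorem 6.3 of Neely's book, and your reconstruction (drop the common penalty term, insert the $\phi$-approximation into the drift bound of Lemma 1, lower-bound the per-slot maximum by the channel-only stationary randomized policy guaranteed by $(\bm{\lambda}+\epsilon\mathbf{1})/\phi\in\mathcal{C}$, and cancel $\phi$ to get $\Delta(\mathbf{Q}(t))\le B+C-\epsilon\sum_i Q_i(t)$) is exactly that standard argument. The one point worth making explicit is that condition (\ref{scaledproblem}) should be read with the expectation of the maximum on the right-hand side, and that the interchange you flag is justified because $\mathbf{H}(t)$ is i.i.d.\ and independent of $\mathbf{Q}(t)$.
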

}

\begin{proof}
The proof can follow that for Theorem 6.3 in\cite{neely}.
\end{proof}

{Theorem 3 suggests that the suboptimal solutions that satisfy
(\ref{scaledproblem}) can provide stability whenever the traffic
arrival rates are interior to a $\phi$-scaled version of the
stability region. In this paper, we will develop suboptimal
solutions by relaxing and reformulating the optimization problem in
(\ref{P1}), while it is extremely difficult to quantify the $\phi$
and $C$ that the algorithms can achieve. The stability region
analysis for our proposed algorithms remains challenging and is left
for future work.}

\begin{remark}
For the general case that the arrival rate vector is outside the
stability region $\mathcal {C}$ or the possible reduced one
$\phi\mathcal {C}$, congestion controls are need to constrain the
arrival rate vector into the stability region. In this case, the
problem can be decomposed into a congestion control subproblem and
joint RRH activation and beamforming subproblem by following the
framework in\cite{neelyutility}. By doing so, we can have the
separate congestion control subproblem and joint RRH activation and
beamforming subproblem, and the deterministic worst-case delay bound
can be guaranteed for each traffic queue.
\end{remark}

\section{GSB-based Equivalent Penalized WMMSE Algorithm}
In this section, we will use the GSB approach to solve the optimization problem in
(\ref{P1}).

\subsection{Group Sparse Beamforming Formulation}

Since only a subset RRHs will be active, we can solve the problem by
exploiting the group sparse structure of the aggregated beamforming
vector ${\bf{w}}(t) = [{{\bf{\tilde w}}}_1^T(t),\cdots, {{\bf{\tilde
w}}}_K^T(t)]^T \in {\cal C}^{MIK \times 1}$, where the coefficients
in ${{\bf{\tilde w}}}_k^T(t)$ form a group{\cite{gcpaper}}. When the
RRH $k$ is switched off, all the coefficients in the vector
${\bf{{\tilde w}}}_k$ are 0, which results in the group sparse
structure. The mixed $\ell _1/\ell_p$-norm is shown to be effective
to induce group sparsity and has attracted lots of
attentions\cite{mixednorm}. {In this subsection, we try to construct
a convex relaxation of (\ref{networkpower}), resulting in a weighted
mixed $\ell _1/\ell_2$-norm. Specifically, we first calculate the
\emph{tightest positively homogeneous lower bound} of $p({\bf{w}})$
with the definition ${p_h}({\bf{w}}) = \mathop {\inf }\limits_{\phi
> 0} \frac{{p(\phi {\bf{w}})}}{\phi }, 0 < \phi < \infty$, which is
still nonconvex.  We then calculate the Fenchel conjugate to provide
its convex envelope $\hat p({\bf{w}})$, which is called as the
\emph{tightest convex positively homogeneous lower bound} of
$p({\bf{w}})$ and is given by the following proposition.}

\begin{proposition}
\emph{The tightest convex positively homogeneous lower bound of
(\ref{networkpower}) is given by
\begin{equation} \label{l1_l2}
\hat p ({\bf{w}}(t)) = 2\sum\limits_{k \in \mathcal {K}} {\sqrt
{\frac{{P_k^c}}{{{\eta _k}}}} } ||{{{\bf{\tilde
w}}}_k(t)}|{|_{{2}}},
\end{equation}}
which is a weighted mixed $\ell _1/\ell _2$-norm.
\end{proposition}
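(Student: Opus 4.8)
The plan is to follow the homogenize-then-convexify recipe stated just before the proposition, viewing $p$ purely as a function of the stacked beamformer ${\bf w}$ and working in the real inner product $\langle {\bf y},{\bf w}\rangle = {\rm Re}\{{\bf y}^{H}{\bf w}\}$. First I would remove the explicit active set by observing that RRH $k$ draws its static power $P_k^c$ precisely when its block beamformer is nonzero, so
\begin{equation}
p({\bf w}) = \sum_{k \in \mathcal K}\left(\frac{1}{\eta_k}\|{\bf{\tilde w}}_k\|_2^2 + P_k^c\, \mathbbm{1}\{\|{\bf{\tilde w}}_k\|_2 > 0\}\right),
\end{equation}
which is separable across the blocks ${\bf{\tilde w}}_k$ and is the object to be homogenized.

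The first step computes the tightest positively homogeneous lower bound $p_h({\bf w}) = \inf_{\phi>0} p(\phi{\bf w})/\phi$. Scaling ${\bf w}$ by $\phi>0$ multiplies each quadratic term by $\phi^2$ while leaving every indicator unchanged, so $p(\phi{\bf w})/\phi = A\phi + B/\phi$ with $A = \sum_{k}\|{\bf{\tilde w}}_k\|_2^2/\eta_k$ and $B = \sum_{k}P_k^c\,\mathbbm{1}\{\|{\bf{\tilde w}}_k\|_2>0\}$. Minimizing $A\phi + B/\phi$ over $\phi>0$ by AM--GM (minimizer $\phi^\star = \sqrt{B/A}$) yields $p_h({\bf w}) = 2\sqrt{AB}$, which is degree-one homogeneous but still nonconvex because of the coupling through the square root and the indicators.

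The second step identifies $\hat p$ as the convex envelope $p_h^{\ast\ast}$ of $p_h$; since $p_h$ is positively homogeneous of degree one, so is its biconjugate, hence $p_h^{\ast\ast}$ is exactly the tightest convex positively homogeneous minorant of $p$. The central computation is the conjugate $p_h^{\ast}$. Because $p_h$ is nonnegative, degree-one homogeneous and vanishes at the origin, a ray-scaling argument shows $p_h^{\ast}$ is the indicator of the convex set $\mathcal Y = \{{\bf y} : \langle{\bf y},{\bf w}\rangle \le p_h({\bf w})\ \forall {\bf w}\}$. I would then make $\mathcal Y$ explicit: maximizing $\langle{\bf y},{\bf w}\rangle$ over all ${\bf w}$ with fixed block norms $\|{\bf{\tilde w}}_k\|_2 = r_k$ gives $\sum_k \|{\bf y}_k\|_2 r_k$ (each block aligned with ${\bf y}_k$), so membership is equivalent to
\begin{equation}
\sum_{k}\|{\bf y}_k\|_2 r_k \le 2\sqrt{\Big(\sum_{k} r_k^2/\eta_k\Big)\Big(\sum_{k:\,r_k>0} P_k^c\Big)}\quad \forall\, r\ge 0.
\end{equation}
Testing single-block vectors ($r_k=0$ except one index) forces $\|{\bf y}_k\|_2 \le 2\sqrt{P_k^c/\eta_k}$ for every $k$, while Cauchy--Schwarz applied to the split $\sqrt{P_k^c/\eta_k}\,r_k = \sqrt{r_k^2/\eta_k}\cdot\sqrt{P_k^c}$ shows this per-block bound is also sufficient. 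Hence $\mathcal Y = \{{\bf y} : \|{\bf y}_k\|_2 \le 2\sqrt{P_k^c/\eta_k},\ \forall k\}$, and its support function is $p_h^{\ast\ast}({\bf w}) = \sup_{{\bf y}\in\mathcal Y}\langle{\bf y},{\bf w}\rangle = \sum_{k} 2\sqrt{P_k^c/\eta_k}\,\|{\bf{\tilde w}}_k\|_2$, which is precisely $\hat p$ in (\ref{l1_l2}) and is visibly a weighted mixed $\ell_1/\ell_2$-norm.

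The main obstacle is the characterization of $\mathcal Y$: establishing that the coupled square-root inequality is equivalent to the decoupled per-block norm bounds. The forward direction relies on the single-block test vectors and the reverse on the Cauchy--Schwarz split above, after which the support function separates over blocks. As a consistency check that these weights are correct, I would verify directly that $\hat p$ is a genuine lower bound touching $p$: blockwise AM--GM gives $2\sqrt{P_k^c/\eta_k}\,\|{\bf{\tilde w}}_k\|_2 \le \|{\bf{\tilde w}}_k\|_2^2/\eta_k + P_k^c$ for ${\bf{\tilde w}}_k \neq 0$, with equality at $\|{\bf{\tilde w}}_k\|_2 = \sqrt{P_k^c\eta_k}$, so $\hat p \le p$ with contact on each block.
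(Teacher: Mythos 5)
Your proof is correct, and it follows the same homogenize-then-convexify recipe that the paper announces before the proposition and executes in Appendix C: both compute $p_h({\bf w})=\inf_{\phi>0}p(\phi{\bf w})/\phi=2\sqrt{AB}$ by the same AM--GM argument and then pass to the Fenchel (bi)conjugate. Where you differ is in how the envelope is extracted. The paper computes the conjugate as an indicator of the unit ball of a set-indexed dual norm $\hat p^*({\bf z})=\sup_{\mathcal X\subseteq\mathcal V}\|{\bf z}_{\mathcal X}\Lambda_{\mathcal X}\|_2/(2\sqrt{F(\mathcal X)})$, invokes an external result (Proposition 2 of the Obozinski--Bach report) to identify the norm with that dual as the convex envelope, and then sandwiches the support function between two inequalities, exhibiting a specific $\bar{\bf z}$ to force equality. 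You instead characterize the dual ball exactly: you reduce membership in $\mathcal Y$ to a coupled inequality in the block radii $r_k$, show by single-block test vectors and a Cauchy--Schwarz split that it decouples into $\|{\bf y}_k\|_2\le 2\sqrt{P_k^c/\eta_k}$ per block, and then the support function of this product of balls separates immediately. Your route is more self-contained (no citation needed for the envelope identification, and the biconjugate-equals-tightest-convex-homogeneous-minorant fact is argued directly) and avoids the paper's upper/lower sandwich; the paper's route generalizes more readily to combinatorial penalties $F$ that are not additive over blocks, which is why the cited framework is stated at that level of generality. One trivial edge case you leave implicit: at ${\bf w}={\bf 0}$ one has $A=B=0$ and $p_h({\bf 0})=0$, which is consistent with $2\sqrt{AB}$, so nothing breaks.
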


\begin{proof}
The proof is in Appendix C.
\end{proof}

The above proposition indicates that the mixed $\ell _1/\ell_2$-norm
can provide a convex relaxation for the cost function
(\ref{networkpower}), thus it can further introduce group sparsity
to $\bf{w}$, that is, many sub-vectors, ${{\bf{\tilde w}}}_k$ will
be 0, which corresponds to inactive RRHs. While the active set of
RRHs corresponds to the non-zero sub-vectors in ${\bf w}$. By
minimizing the weighted mixed $\ell _1/\ell _2$-norm \eqref{l1_l2}
of $\bf{w}$, the zero entries of $\bf{w}$ will be made to align to
the same group ${{\bf{\tilde w}}}_k$, such that the corresponding
RRH is forced to switch off. The weight for each group embraces
additional system parameters. Intuitively, the RRH with a higher
static power consumption and a lower RF power amplifier drain
efficiency will have a high priority being forced to switch off.

Using the weighted mixed $\ell _1/\ell _2$-norm as a surrogate
objective function in (\ref{P1}), we finally have the following
queue-aware group sparse beamforming problem:
\begin{equation}
\begin{array}{l}
\mathop {\max. }\limits_{{\bf{w}}} \sum\limits_{i \in \mathcal {I}}
{{Q_i}{R _i}} - \sum\limits_{k \in \mathcal {K}} {\frac{{2V\sqrt
{P_k^c/{\eta _k}} }}
{{W\tau}}||{{{\bf{\tilde w}}}_k}|{|_2}}, \\
~{\rm{s.t.}}~||{{{\bf{\tilde w}}}_k}||_2^2 \le {P_k}. \\
\end{array}\label{relaxp}
\end{equation}

In the above formulation, the slot index $t$ is skipped to simplify the notation.

\begin{remark}
{The objective function in (\ref{relaxp}) is a convex relaxation to
the original problem (\ref{P1}) using the group sparsity inducing
norm. It has been shown in Proposition 1 that (\ref{relaxp}) is the
tightest convex positively homogeneous lower bound of (\ref{P1}),
that is, among all convex positively homogeneous functions,
(\ref{relaxp}) has the smallest gap with (\ref{P1}). It is very
challenging to quantify the exact performance gap, which normally
requires specific prior information, e.g., in compressive sensing,
the sparse signal is assumed to obey a power law (see Eq. (1.8) in
\cite{ECandes}). However, our problem fundamentally differs from the
existing compressive sensing problems in that we do not have any
prior information about the optimal solution. The optimality
analysis of the queue-aware group sparse beamforming algorithm will
be left to our future work. }

\end{remark}

Next we will design a computationally efficient algorithm that
produces a stationary solution to (\ref{relaxp}) by introducing an
equivalent formulation.

\subsection{Equivalent Formulation and Penalized WMMSE algorithm }

The equivalence between weighted sum rate maximization problem and
WMMSE problem is first established in{\cite{wmmse1}} for
multiple-input and multiple-output (MIMO) broadcast channel and
generalized to MIMO interfering channel in{\cite{wmmse2}}. By
extending the equivalence in{\cite{wmmse1}\cite{wmmse2}}, the
penalized weighted sum rate maximization problem is equivalent to
the following penalized WMMSE problem,
\begin{equation}
\begin{array}{l}
 \mathop {\min.}\limits_{{\bm{\alpha}},{{\bf{u}}},{{\bf{w}}}} ~\sum\limits_{i \in \mathcal {I}} {{Q_i}({\alpha _i}{e_i} - \log {\alpha _i}) } + \sum\limits_{k \in \mathcal {K}} {{\beta _k} ||{{{\bf{\tilde w}}}_k}|{|_{{2}}}}, \\
 ~{\rm{s.t.}}~||{{{\bf{\tilde w}}}_k}||_{{{2}}}^2 \le {P_k}, \\
 \end{array}\label{P2}
\end{equation}
where ${\bm{\alpha}} = \{\alpha _i|i \in \mathcal {I}\}$ is the set
of non-negative mean squared error (MSE) weights, $e_i =
{\bf{u}}_i^H(\sum\limits_{j \in \mathcal {I}}
{{{\bf{H}}_i}{{\bf{w}}_j}{\bf{w}}_j^H{\bf{H}}_i^H + {\sigma
^2}{\bf{I}}} ){{\bf{u}}_i} - 2{\mathop{\rm Re}\nolimits}
\{{\bf{u}}_i^H{{\bf{H}}_i}{{\bf{w}}_i}\} + 1$ is the MSE for
estimating $s_i$, ${\bf{u}} = \{{\bf{u}}_i \in \mathbb{C}^{N \times
1} |i \in \mathcal {I}\}$ is the collection of the receiving vectors
for all UEs,  and ${\beta _k} = \frac{{2V\sqrt {P_k^c/{\eta _k}}
}}{{\log _2^eW\tau}}$ is the parameter that will affect the number
of active RRHs.

{It is worth noting that problem (\ref{P2}) is not jointly convex in
${{\bm{\alpha}},{{\bf{u}}},{{\bf{w}}}}$, while it is convex with
respect to each of the individual optimization variables when fixing
the others. To this end, the block coordinate descent (BCD) method
is utilized to obtain the stationary point of problem (\ref{P2}). As
proven in{\cite{wmmse1}}, once the iterative process converges to a
fixed point of problem (\ref{P2}), the fixed point is also a
stationary point of the problem (\ref{relaxp}). It should be noted
that the stationary point of problem (\ref{relaxp}) or (\ref{P2})
might not be globally optimal.}

Under fixed ${\bf{w}}$ and $\bm{\alpha}$, minimizing the weighted
sum-MSE leads to the well-known MMSE receiver:
\begin{equation}
{\bf{u}}_i = {(\sum\limits_{j \in \mathcal {I}}
{{{\bf{H}}_i}{{\bf{w}}_j}{\bf{w}}_j^H{\bf{H}}_i^H + {\sigma
^2}{\bf{I}}} )^{ - 1}}{{\bf{H}}_i}{{\bf{w}}_i}.\label{ummse}
\end{equation}

With the MMSE receiver, the MSE $e_i$ can be written as
\begin{align}
e_i = 1-{\bf w}_i^H {\bf H}_i^H  \left(\sum\limits_{j
\in \mathcal {I}} {{{\bf{H}}_i}{{\bf{w}}_j}{\bf{w}}_j^H{\bf{H}}_i^H
+ {\sigma ^2}{\bf{I}}} \right)^{-1} {\bf H}_i {\bf w}_i
\end{align}

Under fixed ${\bf{w}}$ and $\bf{u}$, the closed-form $\bm{\alpha}$
can be obtained as follows according to the first-order optimality
conditions:
\begin{equation}
\alpha _i = e_i^{-1}.\label{alphammse}
\end{equation}

Under fixed ${\bf{u}}$ and $\bm{\alpha}$, the optimal ${\bf{w}}$ can
be obtained by solving the following convex problem:
\begin{equation}
\begin{array}{l}
\mathop {\min. }\limits_{{{\bf{w}}}}~\sum\limits_{i \in \mathcal {I}}
{{\bf{w}}_i^H{\bf{C}}} {{\bf{w}}_i} - 2\sum\limits_{i \in \mathcal
{I}} {{\mathop{\rm Re}\nolimits} \{ {{\bf{d}}_i^H}{{\bf{w}}_i}\} } +
\sum\limits_{k \in \mathcal
{K}}\!{{\beta _k}||{{{\bf{\tilde w}}}_k}|{|_{{2}}}},\\
{\rm s.t.}~||{{{\bf{\tilde w}}}_k}||_{_{{2}}}^2 \le {P_k}, \\
\end{array}\label{problemw}
\end{equation}
where ${\bf{C}} = \sum\limits_{j \in \mathcal {I}} {{Q_j}{\alpha
_j}{\bf{H}}_j^H{{\bf{u}}_j}{\bf{u}}_j^H{{\bf{H}}_j}}$ and
${{\bf{d}}_i} = {Q_i}{\alpha _i}{\bf{H}}_i^H{{\bf{u}}_i}$.

The algorithm is summarized in Algorithm \ref{algorithm1}.
\begin{algorithm}[h]
\caption{GSB-based Penalized WMMSE Algorithm}
\begin{algorithmic}[1]
\STATE For each slot $t$, observe the current QSI ${\bf{Q}}(t)$ and
CSI ${\bf{H}}(t)$, then make the queue-aware joint RRH activation
and beamforming according to the following steps: \STATE
\textbf{Initialize} ${\bf{w}}$, ${\bf{u}}$ and $\bm{\alpha}$;
\REPEAT \STATE Fix ${\bf{w}}$, compute the MMSE receiver ${\bf{u}}$
according to (\ref{ummse}) and corresponding MSE $e_i$; \STATE
Update the MSE weight $\bm{\alpha}$ according to (\ref{alphammse});
\STATE Calculate the optimal beamformer ${\bf{w}}$ under fixed
${\bf{u}}$ and $\bm{\alpha}$ by solving (\ref{problemw});\UNTIL
certain stopping criteria is met; \STATE \textbf{Update} the traffic
queue $Q_i(t)$ according to (\ref{qdynamic}).
\end{algorithmic}\label{algorithm1}
\end{algorithm}

%

{The objective function in (\ref{problemw}) contains two parts: the
quadratic part $\sum\limits_{i \in \mathcal {I}}
{{\bf{w}}_i^H{\bf{C}}} {{\bf{w}}_i} - 2\sum\limits_{i \in \mathcal
{I}} {{\mathop{\rm Re}\nolimits} \{ {{\bf{d}}_i^H}{{\bf{w}}_i}\} }$,
and the $\ell _2$-norm part $\sum\limits_{k \in \mathcal
{K}}\!{{\beta _k}||{{{\bf{\tilde w}}}_k}|{|_{{2}}}}$. Unlike the
objective of \emph{standard} group least absolute shrinkage and
selection operator (LASSO) problem, the two parts are functions of
different variables, i.e., ${\bf w}_i$ and ${{\bf{\tilde w}}}_k$,
rather than the same variable. Therefore, existing computationally
efficient algorithms developed for group LASSO in{\cite{admm2}}
cannot be readily applied to solve our \emph{modified} group LASSO
problem (\ref{problemw}).} This fact motivates us to find a new
approach to solve the problem in (\ref{problemw}). Our approach is
based on the famous ADMM algorithm, which will be briefly reviewed
below.

\subsection{Review of ADMM algorithm}

The ADMM algorithm, originally introduced in the 1970s, is a simple
but powerful algorithm that is well suited to distributed convex
optimization, and arbitrary-scale convex optimization. Specifically,
the ADMM is designed to solve the following structured convex
problem{\cite{admm1}}

\begin{equation}
\begin{array}{l}
 \mathop {\min. }\limits_{{\bf{x}} \in {\mathbb{C}^n},{\bf{z}} \in {\mathbb{C}^m}} f({\bf{x}}) + g({\bf{z}}), \\
 ~~~~{\rm{s.t.}}~{\bf{Ax}} + {\bf{Bz}} = {\bf{c}}, \\
~~~~~~~~~{\bf{x}} \in {\mathcal {C}_1}, {\bf{z}} \in {\mathcal {C}_2}, \\
\end{array}\label{ADMMform}
\end{equation}
where ${\bf{A}} \in {\mathbb{C}^{k \times n}}$, ${\bf{B}} \in
{\mathbb{C}^{k \times m}}$, ${\bf{c}} \in {\mathbb{C}^k}$, $f(\cdot)$
and $g(\cdot)$ are convex functions, and $\mathcal {C}_1$ and $\mathcal
{C}_2$ are non-empty convex sets. Correspondingly, the partial
augmented Lagrangian function is given by
\begin{equation}
\begin{array}{l}
{L_\rho }({\bf{x}},{\bf{z}},{\bf{y}}) \!=\! f({\bf{x}}) \!+\!
g({\bf{z}}) \!+\! {\mathop{\rm Re}\nolimits} ({{\bf{y}}^H}({\bf{Ax}}
+ {\bf{Bz}} - {\bf{c}})) \\~~~~~~~~~~~~~+ \frac{\rho }{2}||{\bf{Ax}}
+ {\bf{Bz}} - {\bf{c}}||_2^2,
\end{array}
\end{equation}
where ${\bf{y}} \in {\mathbb{C}^k}$ is the vector of Lagrangian dual variables
associated with the linear equality constraint, and $\rho > 0 $ is a
constant. The ADMM algorithm consists of the following iterations:
\begin{equation}
{{\bf{x}}^{(n+1)}} = \arg \mathop {\min }\limits_{\bf{x}} {L_\rho
}({\bf{x}},{{\bf{z}}^{(n)}},{{\bf{y}}^{(n)}}),
\end{equation}
\begin{equation}
{{\bf{z}}^{(n+1)}} = \arg \mathop {\min }\limits_{\bf{z}} {L_\rho
}({{\bf{x}}^{(n+1)}},{\bf{z}},{{\bf{y}}^{(n)}}),
\end{equation}
\begin{equation}
{{\bf{y}}^{(n+1)}} = {{\bf{y}}^{(n)}} + \rho
({\bf{A}}{{\bf{x}}^{(n+1)}} + {\bf{B}}{{\bf{z}}^{(n+1)}} -
{\bf{c}}).\label{dualupdate}
\end{equation}

It can be easily seen that the ADMM algorithm takes the form of a
decomposition-coordination procedure, in which the solutions to
small local sub-problems are coordinated through dual variable
update (\ref{dualupdate}) to find a solution to a large global
problem. Furthermore, the convergence is established in the
following theorem{\cite{admm1}}.

\begin{theorem}
\emph{Assume that ${{\bf{A}}^T}{\bf{A}} $ and ${{\bf{B}}^T}{\bf{B}}$
are invertible, and the optimal solution to (\ref{ADMMform}) exists.
Then the updated sequence $\{
{{\bf{x}}^{(n)}},{{\bf{z}}^{(n)}},{{\bf{y}}^{(n)}}\}$ is bounded and
the converged $\{ {{\bf{x}}^{(n)}},{{\bf{z}}^{(n)}}\}$ is an optimal
solution of (\ref{ADMMform}).}
\end{theorem}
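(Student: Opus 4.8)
The plan is to follow the classical ADMM convergence analysis of \cite{admm1}, organized around a combined-residual Lyapunov function that decreases monotonically along the iterations; the invertibility hypotheses will be used at the end to upgrade convergence of the images $\mathbf{A}\mathbf{x}^{(n)}$, $\mathbf{B}\mathbf{z}^{(n)}$ to convergence of the iterates themselves. First I would fix an optimal primal--dual triple $(\mathbf{x}^\star,\mathbf{z}^\star,\mathbf{y}^\star)$, whose existence follows from the stated assumption together with convexity of $f$, $g$ and of the sets $\mathcal{C}_1,\mathcal{C}_2$. It is characterized by primal feasibility $\mathbf{A}\mathbf{x}^\star+\mathbf{B}\mathbf{z}^\star=\mathbf{c}$ and the stationarity conditions $\mathbf{0}\in\partial f(\mathbf{x}^\star)+\mathbf{A}^H\mathbf{y}^\star$ and $\mathbf{0}\in\partial g(\mathbf{z}^\star)+\mathbf{B}^H\mathbf{y}^\star$.

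Next I would read off the per-iteration optimality conditions. Writing out the stationarity of the $\mathbf{z}$-update and substituting the dual update \eqref{dualupdate} yields $\mathbf{0}\in\partial g(\mathbf{z}^{(n+1)})+\mathbf{B}^H\mathbf{y}^{(n+1)}$, so $\mathbf{z}^{(n+1)}$ minimizes $g(\mathbf{z})+\mathrm{Re}\{(\mathbf{y}^{(n+1)})^H\mathbf{B}\mathbf{z}\}$. Similarly, the stationarity of the $\mathbf{x}$-update, after using the identity $\mathbf{y}^{(n)}+\rho(\mathbf{A}\mathbf{x}^{(n+1)}+\mathbf{B}\mathbf{z}^{(n)}-\mathbf{c})=\mathbf{y}^{(n+1)}-\rho\mathbf{B}(\mathbf{z}^{(n+1)}-\mathbf{z}^{(n)})$, gives $\mathbf{0}\in\partial f(\mathbf{x}^{(n+1)})+\mathbf{A}^H\big(\mathbf{y}^{(n+1)}-\rho\mathbf{B}(\mathbf{z}^{(n+1)}-\mathbf{z}^{(n)})\big)$. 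These two relations, paired with the convexity (equivalently, subgradient monotonicity) of $f$ and $g$, are the engine of the estimate.

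The core step is the descent inequality. Introducing the primal residual $\mathbf{r}^{(n)}=\mathbf{A}\mathbf{x}^{(n)}+\mathbf{B}\mathbf{z}^{(n)}-\mathbf{c}$ and the Lyapunov function $V^{(n)}=\tfrac{1}{\rho}\|\mathbf{y}^{(n)}-\mathbf{y}^\star\|_2^2+\rho\|\mathbf{B}(\mathbf{z}^{(n)}-\mathbf{z}^\star)\|_2^2$, I would combine the optimality relations above to derive $V^{(n+1)}\le V^{(n)}-\rho\|\mathbf{r}^{(n+1)}\|_2^2-\rho\|\mathbf{B}(\mathbf{z}^{(n+1)}-\mathbf{z}^{(n)})\|_2^2$. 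Since $V^{(n)}\ge 0$ is nonincreasing it is bounded, and telescoping shows $\sum_n\big(\|\mathbf{r}^{(n+1)}\|_2^2+\|\mathbf{B}(\mathbf{z}^{(n+1)}-\mathbf{z}^{(n)})\|_2^2\big)<\infty$. Hence the primal residual $\mathbf{r}^{(n)}\to\mathbf{0}$ and the dual residual $\rho\mathbf{A}^H\mathbf{B}(\mathbf{z}^{(n)}-\mathbf{z}^{(n-1)})\to\mathbf{0}$, while boundedness of $V^{(n)}$ makes $\{\mathbf{y}^{(n)}\}$ and $\{\mathbf{B}\mathbf{z}^{(n)}\}$ bounded. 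I expect the clean bookkeeping of the cross terms in this descent inequality to be the main obstacle, since it is the one place where the three convexity estimates must be summed with exactly matching constants.

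Finally I would convert these facts into the stated conclusion. Vanishing residuals together with the per-iteration stationarity relations force the objective $f(\mathbf{x}^{(n)})+g(\mathbf{z}^{(n)})$ to the optimal value and $\mathbf{y}^{(n)}\to\mathbf{y}^\star$; this is the standard part. The distinctive use of the hypotheses comes here: invertibility of $\mathbf{A}^H\mathbf{A}$ and $\mathbf{B}^H\mathbf{B}$ means $\mathbf{x}\mapsto\mathbf{A}\mathbf{x}$ and $\mathbf{z}\mapsto\mathbf{B}\mathbf{z}$ are injective with bounded left inverse, so boundedness and convergence of the images transfer to $\{\mathbf{x}^{(n)}\}$ and $\{\mathbf{z}^{(n)}\}$. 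A subsequential-limit argument---Bolzano--Weierstrass on the now-bounded iterates, closedness of the subdifferential graphs, and $\mathbf{r}^{(n)}\to\mathbf{0}$---then shows any limit point satisfies the three KKT conditions and is therefore optimal, and the injectivity promotes this to convergence of the full sequence $\{\mathbf{x}^{(n)},\mathbf{z}^{(n)}\}$ to an optimal solution of \eqref{ADMMform}.
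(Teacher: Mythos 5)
The paper gives no proof of this theorem at all---it is stated as a known result and attributed directly to \cite{admm1}---and your sketch is precisely the classical combined-residual Lyapunov argument from that reference, so it matches the paper's (cited) approach. The outline is correct: the clean/perturbed stationarity conditions for the two block updates, the descent inequality for $V^{(n)}$, the resulting summability of the primal and dual residuals, and the use of the invertibility of ${\bf A}^H{\bf A}$ and ${\bf B}^H{\bf B}$ to transfer boundedness and convergence from ${\bf A}{\bf x}^{(n)}$, ${\bf B}{\bf z}^{(n)}$ back to the iterates themselves are all in order.
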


\subsection{ADMM Algorithm to (\ref{problemw})}

In this subsection, based on splitting the variables ${\bf w}$, the tightly
coupled large-sized problem in \eqref{problemw} will be decomposed into several
sub-problems and solved efficiently by the ADMM algorithm with
closed-form solutions.

To account for the difference between ${\bf w}_i$ and ${\bf {\tilde w}}_k$ in \eqref{problemw},
we introduce a copy ${{\bf{\tilde v}}}_k$ for the original beamformer
${{\bf{\tilde w}}}_k$, and define ${\bf{v}} = [{{\bf{\tilde
v}}}_1^T, ..., {{\bf{\tilde v}}}_K^T]^T \in {\mathbb{C}^{MKI \times
1}}$. The optimization problem in \eqref{problemw} can be equivalently expressed as
\begin{equation}
\begin{array}{l}
\mathop {\min. }\limits_{{{\bf{w}},{\bf{v}}}} \sum\limits_{i \in
\mathcal {I}}{{\bf{w}}_i^H{\bf{C}}} {{\bf{w}}_i}- 2\sum\limits_{i
\in \mathcal {I}}\! {{\mathop{\rm Re}\nolimits} \{
{{\bf{d}}_i^H}{{\bf{w}}_i}\} }
+\sum\limits_{k \in \mathcal {K}} {{\beta _k}||{{{\bf{\tilde v}}}_k}|{|_{{2}}}},\\
{\rm{s.t.}}{~}||{{{\bf{\tilde v}}}_k}||_{_{{2}}}^2 \le {P_k}, \\
{~~~~~}{{\bf{\tilde v}}}_k = {{\bf{\tilde w}}}_k.\\
\end{array}\label{varsplit}
\end{equation}

The partial augmented Lagrangian function of the above problem is given
by
\begin{equation}
\begin{array}{l}
L({\bf{w}},{\bf{v}},{\bf{y}}) = \mathop {\min.
}\limits_{{{\bf{w}},{\bf{v}}}} \!\sum\limits_{i \in \mathcal {I}}\!
{{\bf{w}}_i^H{\bf{C}}} {{\bf{w}}_i} - 2\sum\limits_{i \in \mathcal
{I}} {{\mathop{\rm Re}\nolimits} \{ {{\bf{d}}_i^H}{{\bf{w}}_i}\} }\\
~~~~~~~~~~~~~+\! \sum\limits_{k \in \mathcal {K}} {\!{\beta _k}
||{{{\bf{\tilde v}}}_k}|{|_{{2}}}} \!+\!\! \sum\limits_{k \in
\mathcal {K}} {{\mathop{\!\!\rm Re}\nolimits} \{ {\bf{\tilde
y}}_k^H({{{\bf{\tilde v}}}_k} - {{{\bf{\tilde w}}}_k})\} } \\
~~~~~~~~~~~~~+ \frac{\rho }{2}\sum\limits_{k \in \mathcal {K}} {||{{{\bf{\tilde v}}}_k} - {{{\bf{\tilde w}}}_k}||_2^2},\\
\end{array}\label{P3}
\end{equation}
where ${\bf{y}} = [{\bf{\tilde y}}_1^T, ..., {\bf{\tilde y}}_K^T
]^T$, with ${\bf{\tilde y}}_k =
[{\bf{y}}_{k1}^T,...,{\bf{y}}_{kI}^T]^T \in \mathbb{C}^{MI \times
1}$ being the vector of Lagrangian dual variables for the equality
constraints in (\ref{varsplit}), and $\rho > 0$ is some constant.
The main steps of the ADMM algorithm are summarized in Algorithm
\ref{algorithm2}.
\begin{algorithm}[h]
\caption{ADMM Algorithm for (\ref{problemw})}
\begin{algorithmic}[1]
\STATE \textbf{Initialize} all primal variables ${\bf{w}}^{(0)}$,
${\bf{v}}^{(0)}$ and all dual variables ${\bf{y}}^{(0)}$. \REPEAT
\STATE Solve the following problem and obtain ${\bf{v}}^{(n+1)}$,
\begin{equation}
\begin{array}{l}
\mathop {\min. }\limits_{\bf{v}} L({{\bf{w}}^{(n)}},{\bf{v}},{{\bf{y}}^{(n)}}), \\
~{\rm{s.t.}}~||{{{\bf{\tilde v}}}_k}||_2^2 \le {P_k}; \\
\end{array}\nonumber
\end{equation}
\STATE Solve the following problem and obtain ${\bf{w}}^{(n+1)}$,
\begin{equation}
\mathop {\min. }\limits_{\bf{w}} L({\bf{w}},{{\bf{v}}^{(n +
1)}},{{\bf{y}}^{(n)}});\nonumber
\end{equation}
\STATE Update the multipliers ${\bf{y}}^{(n+1)}$ by
\begin{equation}
{{{\bf{\tilde y}}}_k^{(n + 1)}} = {{{\bf{\tilde y}}}_k^{(n)}} + \rho
({{{\bf{\tilde v}}}_k^{(n+1)}} - {{{\bf{\tilde w}}}_k^{(n+1)}});\nonumber
\end{equation}
\UNTIL certain stopping criteria is met.
\end{algorithmic}\label{algorithm2}
\end{algorithm}

Before obtaining the closed-form expressions for each iteration in
the above algorithm, the convergence of Algorithm 2 is first
discussed.

\begin{theorem}
\emph{Every limit point ${\bf{w}}^{(n)}$ and ${\bf{v}}^{(n)}$
generated by Algorithm 2 is an optimal solution of problem
(\ref{problemw}).}
\end{theorem}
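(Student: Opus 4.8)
The plan is to recognize the split problem (\ref{varsplit}) as a special case of the canonical two-block structure (\ref{ADMMform}) and then invoke the ADMM convergence result (Theorem 4). First I would establish the correspondence. Taking ${\bf{x}} = {\bf{w}}$ and ${\bf{z}} = {\bf{v}}$, set the smooth block $f({\bf{w}}) = \sum_{i \in \mathcal{I}} {\bf{w}}_i^H {\bf{C}} {\bf{w}}_i - 2\sum_{i \in \mathcal{I}} {\rm Re}\{{\bf{d}}_i^H {\bf{w}}_i\}$ and the nonsmooth block $g({\bf{v}}) = \sum_{k \in \mathcal{K}} \beta_k \|{\bf{\tilde v}}_k\|_2$. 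The coupling constraint ${\bf{\tilde v}}_k = {\bf{\tilde w}}_k$ for every $k$ is precisely ${\bf{A}}{\bf{w}} + {\bf{B}}{\bf{v}} = {\bf{c}}$ with ${\bf{A}} = {\bf{I}}$, ${\bf{B}} = -{\bf{I}}$, and ${\bf{c}} = {\bf{0}}$ (the UE-wise versus RRH-wise grouping of the same vector being a fixed orthogonal reshuffling folded into the definition of $f$), while the convex sets are $\mathcal{C}_1 = \mathbb{C}^{MIK}$ and $\mathcal{C}_2 = \{{\bf{v}} : \|{\bf{\tilde v}}_k\|_2^2 \le P_k,\ \forall k\}$, the latter absorbing the per-RRH power limit.

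Second, I would check the three hypotheses of Theorem 4. Convexity of $f$ follows from ${\bf{C}} = \sum_{j \in \mathcal{I}} Q_j \alpha_j {\bf{H}}_j^H {\bf{u}}_j {\bf{u}}_j^H {\bf{H}}_j \succeq {\bf{0}}$, since it is a nonnegatively weighted sum of the rank-one positive semidefinite matrices $({\bf{u}}_j^H {\bf{H}}_j)^H ({\bf{u}}_j^H {\bf{H}}_j)$ with weights $Q_j \ge 0$ and $\alpha_j = e_j^{-1} > 0$; hence the quadratic term is convex and the remaining term affine. Convexity of $g$ is immediate, being a sum of Euclidean norms restricted to a product of convex balls. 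The matrix conditions hold trivially because ${\bf{A}}^H{\bf{A}} = {\bf{B}}^H{\bf{B}} = {\bf{I}}$. Finally, an optimizer of (\ref{varsplit}) (equivalently of (\ref{problemw})) exists by the Weierstrass theorem: the feasible region cut out by the power constraints is nonempty (it contains ${\bf{w}} = {\bf{0}}$), closed and bounded, hence compact, and the objective is continuous.

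With every hypothesis verified, Theorem 4 guarantees that the sequence $\{{\bf{w}}^{(n)}, {\bf{v}}^{(n)}, {\bf{y}}^{(n)}\}$ produced by Algorithm 2 is bounded, so by Bolzano--Weierstrass it possesses convergent subsequences, and the optimality conclusion of Theorem 4 applies to the limit of each such subsequence. Because (\ref{varsplit}) is an exact reformulation of (\ref{problemw}) --- the consensus constraint forces ${\bf{\tilde v}}_k = {\bf{\tilde w}}_k$ at every feasible point, so the two problems share the same optimal value and the ${\bf{w}}$-component of any minimizer of (\ref{varsplit}) minimizes (\ref{problemw}) --- every limit point of $\{{\bf{w}}^{(n)}, {\bf{v}}^{(n)}\}$ is optimal for (\ref{problemw}). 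I expect the main obstacle to lie in the bookkeeping of the first step: since ${\bf{w}}$ enters $f$ through its UE-indexed blocks ${\bf{w}}_i$ but enters $g$ and the constraint through its RRH-indexed blocks ${\bf{\tilde w}}_k$, one must make this reindexing explicit and confirm it leaves ${\bf{A}}^H{\bf{A}}$ and ${\bf{B}}^H{\bf{B}}$ equal to the identity; the convexity of $f$ via ${\bf{C}} \succeq {\bf{0}}$ and the attainment of the minimum over the compact power region are then routine.
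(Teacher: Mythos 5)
Your proof is correct and follows essentially the same route as the paper: cast (\ref{varsplit}) as an instance of the canonical form (\ref{ADMMform}) and invoke Theorem 4 after checking convexity of the two blocks, the identity-matrix conditions ${\bf A}^H{\bf A}={\bf B}^H{\bf B}={\bf I}$, and existence of a minimizer. The only cosmetic difference is that the paper identifies ${\bf x}={\bf v}$ and ${\bf z}={\bf w}$ (matching the order in which Algorithm 2 updates the blocks) while you swap the roles and write ${\bf B}=-{\bf I}$; your additional verifications (convexity of the quadratic via ${\bf C}\succeq {\bf 0}$, compactness of the feasible set for existence, and exactness of the consensus reformulation) are details the paper leaves implicit.
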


\begin{proof}
The proof is in Appendix D.
\end{proof}

In Algorithm 2, given ${\bf w}$ and ${\bf y}$, the step to obtain $\bf{v}$ can be further
decomposed into $K$ sub-problems, each of which is associated with a
RRH and can be solved in a parallel manner. By completing the
squares in (\ref{P3}), the $k$-th convex sub-problem for RRH $k$ can
be simplified to
\begin{equation}
\begin{array}{l}
\mathop {\min. }\limits_{{{{\bf{\tilde v}}}_k}} \beta _k ||{{{\bf{\tilde v}}}_k}|{|_2} +
\frac{\rho }{2}||{{{\bf{\tilde v}}}_k} - {{{\bf{\tilde w}}}_k} + {{\bf{\tilde y}}_k}/\rho ||_2^2,\\
{\rm{s.t.}}~||{{{\bf{\tilde v}}}_k}||_{_{{2}}}^2 \le {P_k}. \\
\end{array}
\end{equation}

The corresponding Karush-Kuhn-Tucker (KKT) conditions are given by
\begin{equation}
\rho {{\bf{b}}_k} - (\rho + 2{\gamma _k^*}){{\bf{\tilde v}}_k^*} \in
{\beta _k}{\partial (||{{\bf{\tilde v}}_k^*}|{|_2})},
\end{equation}
\begin{equation}
||{{\bf{\tilde v}}_k^*}||_2^2 \le {P_k},{\gamma _k^*} \ge 0,
(||{{\bf{\tilde v}}_k^*}||_2^2 - {P_k}){\gamma _k^*} = 0,
\end{equation}
where ${{\bf{b}}_k} = {{{\bf{\tilde w}}}_k} - {{\bf{\tilde
y}}_k}/\rho $, $\gamma _k^*$ is the optimal Lagrangian multiplier
associated with the power constraint, and ${\partial (||{{\bf{\tilde
v}}_k^*}|{|_2})}$ represents the subdifferential of $\ell _2$-norm
$||\cdot||_2$ at the point ${{\bf{\tilde v}}_k^*}$, which can be
expressed as follows{\cite{subdnorm}}
\begin{equation}
{\partial (||{{\bf{\tilde v}}_k^*}|{|_2})} = \left\{
{\begin{array}{*{20}{c}}
 {\frac{{{{\bf{\tilde v}}_k^*}}}{{||{{\bf{\tilde v}}_k^*}||_2}},~~~~~~~~~~{{\bf{\tilde v}}_k^*} \ne 0} \\
 {\{ {\bf{x}},||{\bf{x}}|{|_2} \le 1\} ,{{\bf{\tilde v}}_k^*} = 0} \\
\end{array}} \right..
\end{equation}

Therefore, we have ${{{\bf{\tilde v}}_k^*} = 0}$ whenever $||\rho
{{\bf{b}}_k}|{|_2} \le {\beta _k}$. When $||\rho {{\bf{b}}_k}|{|_2}
> {\beta _k}$, we have
\begin{equation}
{{\bf{\tilde v}}_k^*} = \frac{{(\rho ||{{\bf{b}}_k}|{|_2} - {\beta
_k}){{\bf{b}}_k}}}{{(\rho + 2{\gamma _k^*})||{{\bf{b}}_k}|{|_2}}}.
\end{equation}

Furthermore, according to the complementary conditions, ${{\gamma
_k^*}} = 0$ if $||\frac{{(\rho ||{{\bf{b}}_k}|{|_2} - {\beta
_k}){{\bf{b}}_k}}}{{\rho ||{{\bf{b}}_k}|{|_2}}}||_2^2 < {P_k}$.
Otherwise, we should have $||\frac{{(\rho ||{{\bf{b}}_k}|{|_2} -
{\beta _k}){{\bf{b}}_k}}}{{(\rho + 2{\gamma
_k^*})||{{\bf{b}}_k}|{|_2}}}||_2^2 = {P_k}$, which indicates that
${\gamma _k^*} = \frac{{\rho ||{{\bf{b}}_k}|{|_2} - {\beta _k} -
\rho \sqrt {{P_k}} }}{{2\sqrt {{P_k}} }}$. In a summary, the
closed-form solution for updating $\bf{v}$ is given by
\begin{equation}
{{\bf{\tilde v}}_k^*} = \left\{ {\begin{array}{*{20}{c}}
 {{\bf{0}},~~~~~~~~~~~~~~~||{{\bf{b}}_k}|{|_2} \le \frac{{{\beta _k}}}{\rho }},~~~~~~~~~~~~~~ \\
 {\frac{{(\rho ||{{\bf{b}}_k}|{|_2} - {\beta _k}){{\bf{b}}_k}}}{{\rho ||{{\bf{b}}_k}|{|_2}}},\frac{{{\beta _k}}}{\rho } < ||{{\bf{b}}_k}|{|_2} < \frac{{{\beta _k}}}{\rho } + \sqrt {{P_k}} }, \\
 {\frac{{{{\bf{b}}_k}\sqrt {{P_k}} }}{{||{{\bf{b}}_k}|{|_2}}},~~~~~~~~~\rm{otherwise}}. ~~~~~~~~~~~~~~~~~ \\
\end{array}} \right.
\end{equation}

As for the step to obtain ${\bf{w}}$, it can be further decomposed
into $I$ sub-problems, each of which is associated with a UE and can
be solved in a parallel manner. By completing the squares in
(\ref{P3}), the $i$-th unconstrained convex sub-problem for UE $i$
can be simplified as
\begin{equation} \label{unconstraintsub}
\mathop {\min. }\limits_{{{\bf{w}}_i}}
{\bf{w}}_i^H{\bf{C}}{{\bf{w}}_i}-2{\mathop{\rm Re}\nolimits}
\{{{\bf{d}}_i^H}{{\bf{w}}_i}\} + \frac{\rho }{2}\sum\limits_{k \in
\mathcal {K}} {||{{{\bf{\tilde w}}}_k} - {{{\bf{\tilde v}}}_k} -
{{\bf{\tilde y}}_k}/\rho ||^2_2}.
\end{equation}

Since the summation term in \eqref{unconstraintsub} is the summation
of squared $\ell_2$-norms, it can be alternatively expressed as
$\sum\limits_{k \in \mathcal {K}} {||{{{\bf{\tilde w}}}_k} -
{{{\bf{\tilde v}}}_k} - {{\bf{\tilde y}}_k}/\rho
||^2_2}=\sum\limits_{i \in \mathcal {I}} {||{{{\bf{ w}}}_i} -
{{{\bf{ v}}}_i} - {{\bf{y}}_i}/\rho ||^2_2}$. Differentiating the
objective with respect to ${\bf{w}}_i$ and setting the result equal
to zero, the optimal ${\bf{w}}_i$ is given by the following
closed-form expression:
\begin{equation}
{\bf{w}}_i^* = {(2{\bf{C}} + \rho {\bf{I}})^{ - 1}}(2{{\bf{d}}_i} +
\rho {{\bf{v}}_i} + {{\bf{y}}_i}),\label{optimalw_i}
\end{equation}
where ${{\bf{y}}_i} = {[{\bf{y}}_{1i}^T,...,{\bf{y}}_{Ki}^T]^T} \in
{\mathbb{C}^{MK \times 1}}$, with ${{\bf{y}}_{ki}} \in
{\mathbb{C}^{M \times 1}} $ being the $i$-th block of ${{\bf{\tilde
y}}_k}$.

\subsection{Discussions on Implementation and Complexity}

{
\subsubsection{Further Reduction of the Number of Active RRHs}
To further decrease the number of active RRHs by enhancing the
group-sparsity for the beamformer, the reweighting procedure that
adaptively reweights the coefficient $\beta_k$ in (\ref{problemw})
can be utilized. Specifically, this can be done in step 6 of
Algorithm 1 by solving problem (\ref{problemw}) and updating the
coefficient $\beta_k$ iteratively, see Section V-C
in\cite{groupgreen} for details. By doing so,the number of active
RRHs is expected to  be smaller than that obtained by solving
problem (\ref{problemw}) only once. However, the enhanced Algorithm
1 with the reweighting procedure will involve two loops, which
results in a high computational complexity. As a result, the
tradeoff between complexity and accuracy should be carefully
considered in the implementation.}

\subsubsection{Parallelized Implementation}

 Each step of
Algorithm 1 and Algorithm 2 can be carried out in a parallel manner.
Except for the MSE weights $\bm{\alpha}$ and the Lagrangian dual
variables $\bf{y}$, the computation for beamformer ${{{\bf{\tilde
v}}}_k}$ and beamformer ${{\bf{w}}_i}$ can be performed in the
parallel computing units of BBU pool for each RRH and each UE,
respectively, without any information exchange. After that, both
$\bm{\alpha}$ and $\bf{y}$ are updated with outputted ${{{\bf{\tilde
v}}}_k}$ and ${{\bf{w}}_i}$. Once the update is done, ${{{\bf{\tilde
v}}}_k}$ and ${{\bf{w}}_i}$ can be calculated simultaneously again
in the parallel computing units of BBU pool.

\subsubsection{Computational Complexity}

The proposed solution is highly efficient as each step of Algorithm
1 and Algorithm 2 is in closed-form. Specifically, the main
computational complexity is related to the matrix inversion in
(\ref{ummse}) and (\ref{optimalw_i}), which have computational
complexity in the order of $\mathcal {O}(N^3)$ and $\mathcal
{O}((MK)^3)$, respectively. Compared with the standard interior
point method, which has a computational complexity in the order of
$\mathcal {O}((MKI)^{3.5})$, the proposed GSB-based solution has a
lower computational complexity, especially for large-scale C-RANs.

\section{RIP-based Equivalent Penalized WMMSE algorithm}
In this section, we will use the RIP approach to solve the optimization problem in (\ref{P1}).

\subsection{Mixed Integer Programming Formulation}

A set of binary variables ${\bf{s}} = \{ {s_k}:{s_k} \in \{ 0,1\} ,k
\in \mathcal {K}\}$ is first introduced to indicate the
active/sleeping states of each RRH, where $s_k = 1$ when RRH $k$ is
activated, and $s_k = 0$ when RRH $k$ is asleep. With ${\bf{s}}(t)$,
the network power consumption model (\ref{networkpower}) can be
rewritten as
\begin{equation}
p({\bf{s}}(t),{\bf{w}}(t)) = \sum\limits_{k \in \mathcal {K}}
{{s_k}(t)(||{{{\bf{\tilde w}}}_k(t)}|{|_2^2} +
P_k^c)}.\label{Eqpowernew}
\end{equation}

{Replacing the penalty term of (\ref{P1}) with the power consumption
model defined in (\ref{Eqpowernew}), we have the following penalized
weighted sum rate maximization problem:}
\begin{equation}
\begin{array}{l}
\mathop {\max. ~}\limits_{{\bf{s}}, {\bf{w}}}\sum\limits_{i \in
\mathcal {I}}
{{Q_i}{R _i}} - {\varphi} \sum\limits_{k \in \mathcal {K}} {{s_k}(||{{{\bf{\tilde w}}}_k}|{|_2^2} + P_k^c)},\\
~{\rm{s.t.}}~||{{{\bf{\tilde w}}}_k}||_2^2 \le P_k,\\
~~~~~s_k = \{0,1\},
\end{array}\label{RIPP1}
\end{equation}
where $\varphi = V/(W\tau \log _2^e)$. As can be seen, the nonlinear
cross-multiplication terms (NCMTs) in the cost function impose a
great challenge on algorithm design. Inspired by the fact that
${{\bf{\tilde w}}}_k$ is equal to $\bf{0}$ if RRH $k$ is turned off,
we can cancel all the NCMTs in the objective function and constraint
as follows:
\begin{equation}
\begin{array}{l}
\mathop {\max. ~}\limits_{{\bf{w}},{\bf{s}}}\sum\limits_{i \in
\mathcal {I}} {{Q_i}{R_i}} - {\varphi}\sum\limits_{k \in \mathcal
{K}} {(||{{{\bf{\tilde
w}}}_k}|{|_2^2} + {s_k}P_k^c)},\\
~{\rm{s.t.}}~||{{{\bf{\tilde w}}}_k}||_2^2 \le {s_k}P_k,\\
~~~~~~s_k = \{0,1\},
\end{array}\label{RIPP2}
\end{equation}

The above problem is a mixed integer nonlinear programming, and some
standard algorithms have been developed to solve it, e.g., the
branch-and-bound (BnB) algorithm{\cite{BnB}}. However, the
computational complexity of BnB is prohibitive for a large-scale
C-RAN. For the worst case, $2^K$ iterations is required, thus their
computational complexity is approximated as $\mathcal
{O}(2^K(KMI)^{3.5})$, which grows exponentially with the number of
RRHs and cannot be applied in practice. Efficient algorithm will be
studied in the following subsection.

\subsection{Equivalent Formulation with Relaxed Integer Programming}

Utilizing the established equivalence between penalized weighted
sum rate maximization and penalized WMMSE in section IV, and
relaxing the binary variable $s_k$ to take continuous value in
$[0,1]$, the problem can be transformed as:
\begin{equation}
\begin{array}{l}
\mathop {\min.
}\limits_{{\bm{\alpha}},{{\bf{u}}},{{\bf{s}}},{{\bf{w}}}}
\sum\limits_{i \in \mathcal {I}} {{Q_i}({\alpha _i}{e_i} - \log
{\alpha _i}) } + {\varphi} \sum\limits_{k \in \mathcal {K}}
{(||{{{\bf{\tilde
w}}}_k}|{|_2^2} + {s_k}P_k^c)}, \\
{\rm{s.t.}}~~ ||{{{\bf{\tilde w}}}_k}||_{{{2}}}^2 \le {s_k}{P_k}, \\
~~~~~~s_k \in [0,1].\\
\end{array}\label{RIPP6}
\end{equation}

The above problem is strongly convex for individual variable
$\bm{\alpha}$ and ${\bf{u}}$ when fixing the rest. Correspondingly,
the unique optimal solutions $\bm{\alpha}^*$ and ${\bf{u}}^*$ are
given by (\ref{ummse}) and (\ref{alphammse}), respectively.
Furthermore, when fixing $\bm{\alpha}$ and ${\bf{u}}$, the
optimization problem for $\bf{s}$ and $\bf{w}$ is convex and is
given by
\begin{equation}
\begin{array}{l}
\mathop {\min. }\limits_{{{\bf{w}},{\bf{s}}}} \! \sum\limits_{i \in
\mathcal {I}} \!{{\bf{w}}_i^H}{\bf{C}}{{\bf{w}}_i} -
2\!\sum\limits_{i \in \mathcal {I}} {{\mathop{\!\rm Re}\nolimits} \{
{{\bf{d}}_i^H}{{\bf{w}}_i}\} } + {\varphi}\!\!\!\sum\limits_{k \in
\mathcal {K}}
{\!\!({s_k}P_k^c \!+\! ||{{{\bf{\tilde w}}}_k}||_2^2)},\\
{\rm{s.t.~}}||{{{\bf{\tilde w}}}_k}||_{_{{2}}}^2 \le {s_k}{P_k}, \\
~~~~~s_k \in [0,1].\\
\end{array}\label{RIPBCD}
\end{equation}

Therefore, a stationary solution can be obtained with the BCD method
by iteratively optimizing over three block variables $\bm{\alpha}$,
${\bf{u}}$ and $\{\bf{s}, {\bf{w}}\}$. The algorithm is summarized
in Algorithm \ref{algorithm3}.

\begin{algorithm}[h]
\caption{RIP-based Penalized WMMSE Algorithm}
\begin{algorithmic}[1]
\STATE For each slot $t$, observe the current QSI ${\bf{Q}}(t)$ and
CSI ${\bf{H}}(t)$, then make the queue-aware joint RRH activation
and beamforming according to the following steps: \STATE
\textbf{Initialize} ${\bf{s}}$, ${\bf{w}}$, ${\bf{u}}$ and
$\bm{\alpha}$; \REPEAT \STATE Fix ${\bf{s}}$ and ${\bf{w}}$, compute
the MMSE receiver ${\bf{u}}$ according to (\ref{ummse}) and the
corresponding MSE. \STATE Update the MSE weight $\bm{\alpha}$
according to (\ref{alphammse}); \STATE Obtain the RRH activation
decision $\bf{s}$ and the beamformer ${\bf{w}}$ under fixed
${\bf{u}}$ and $\bm{\alpha}$ by solving (\ref{RIPBCD});\UNTIL
certain stopping criteria is met; \STATE \textbf{Update} the traffic
queue $Q_i(t)$ according to (\ref{qdynamic}).
\end{algorithmic}\label{algorithm3}
\end{algorithm}

\subsection{RRH Activation and Beamforming based on Lagrangian Dual Decomposition}
The Lagrangian function of problem (\ref{RIPBCD}) is given
by{\cite{boydconvex}}

\begin{equation}
\begin{array}{l}
L({\bf{w}},{\bf{s}},{\bm{\theta }}) = \sum\limits_{i \in \mathcal
{I}} {{\bf{w}}_i^H}{\bf{C}}{{\bf{w}}_i} - 2\sum\limits_{i \in
\mathcal {I}} {{\mathop{\rm Re}\nolimits} \{
{{\bf{d}}_i^H}{{\bf{w}}_i}\} } \\
+ \sum\limits_{k \in \mathcal {K}}
{(\varphi P_k^c - {\theta _k}{P_k}){s_k}} + \sum\limits_{k \in \mathcal {K}} {(\varphi + {\theta _k})||{{{\bf{\tilde w}}}_k}||_2^2}. \\
\end{array}
\end{equation}
where $\bm{\theta} = [\theta _1, \theta _2, ..., \theta _K] \succeq
{\bf{0}}$ is the vector of dual variables associated with the
network power consumption constraints. Correspondingly, the Lagrange
dual function is given by

\begin{equation}
\begin{array}{l}
D(\bm{\theta} ) = \mathop {\min. }\limits_{{\bf{s}},{\bf{w}}} L({\bf{s}},{\bf{w}},\bm{\theta} ), \\
~~~~~~~~~~~{\rm{s.t.}}~{s_k} \in [0,1], \\
\end{array}
\end{equation}
and the dual optimization problem is formulated as
\begin{equation}
\begin{array}{l}
\mathop {\max.~}\limits_{\bm{\theta}} D({\bm{\theta}} ), \\
~{\rm{s.t.}}~{\bm \theta} \succeq {\bf{0}}. \\
\end{array}
\end{equation}

The Lagrangian function $L({\bf{w}},{\bf{s}},{\bm{\theta}})$ is
linear with ${\bm{\theta}}$ for any fixed $\bf{s}$ and ${\bf{w}}$,
while the dual function $D(\bm{\theta})$ is the maximum of these
linear functions. Therefore, the dual optimization problem is always
concave.

Given Lagrangian dual variables, the problem of RRH activation and
beamforming can be decomposed and solved separately. For the problem
of RRH activation, it can be further decomposed into $K$ independent
problems and solved in a parallel manner. The activation problem for
RRH $k$ is given by
\begin{equation}
\begin{array}{l}
\mathop {\min. }\limits_{{s_k}} (\varphi P_k^c - {\theta _k}{P_k}){s_k}, \\
~{\rm s.t.}~0 \le {s_k} \le 1. \\
\end{array}
\end{equation}

It can be easily seen that the optimal solution of $\bf{s}$ is given
by
\begin{equation}
{s_k^*} = \left\{ {\begin{array}{*{20}{c}}
 {0,~~~{\theta _k} \le \varphi P_k^c/{P_k}}, \\
 {1,~~~{\theta _k} > \varphi P_k^c/{P_k}}. \\
\end{array}} \right.\label{RRHonoff}
\end{equation}

{It is worth noting that after the integer relaxation on $s_k(t)$,
we can still obtain optimal solution that is binary. Therefore, the
integer relaxation does not introduce performance loss and there is
no gap with integer relaxation.}

Given Lagrangian dual variables, the problem to derive beamforming
vectors is given by
\begin{equation}
\mathop {\min. }\limits_{{\bf{w}}} \sum\limits_{i \in \mathcal {I}}
{{\bf{w}}_i^H}{\bf{C}} {{\bf{w}}_i} - 2\sum\limits_{i \in \mathcal
{I}} {{\mathop{\rm Re}\nolimits} \{ {\bf{d}}_i^H{{\bf{w}}_i}\} } +
\sum\limits_{k \in \mathcal {K}} {(\varphi +
{\theta_k})||{{{\bf{\tilde w}}}_k}||_2^2},
\end{equation}
which can be further rewritten as
\begin{equation}
\mathop {\min. }\limits_{{\bf{w}}} \sum\limits_{i \in \mathcal {I}}
{{\bf{w}}_i^H}{\bf{C}} {{\bf{w}}_i} - 2\sum\limits_{i \in \mathcal
{I}} {{\mathop{\rm Re}\nolimits} \{ {\bf{d}}_i^H{{\bf{w}}_i}\} } +
\sum\limits_{i \in \mathcal {I}} {{\bf{w}}_i^H{\bm{\Omega}}
{{\bf{w}}_i}},\label{ripwi}
\end{equation}
where ${\bm{\Omega}} = {\rm{diag}}([{\varphi} + {\theta
_1},...,{\varphi} + {\theta _K}] \otimes {{\bf{1}}_M})$ is a
diagonal matrix, $\otimes$ denotes the Kronecker product of two
vectors, and ${{\bf{1}}_M}$ denotes a length $M$ all-one vector. The
problem (\ref{ripwi}) is an unconstrained convex problem, and it can
be further decomposed into $I$ independent problems, each
corresponding to an UE and solved in a parallel manner. Thus,
according to the first-order optimality condition, the optimal
beamforming vector for UE $i$ is given by
\begin{equation}
{{\bf{w}}_i^*} = {({\bf{C}} + {\bm{\Omega}})^{ -
1}}{{\bf{d}}_i}.\label{beamformingRIP}
\end{equation}

\subsection{Lagrangian Dual Variables Update }

As the dual problem is always concave w.r.t. $\bm{\theta}$, we can
adopt the subgradient projection method to solve
it{\cite{boydconvex}}. In particular, it is easy to prove that the
subgradient of the dual function is obtained by
\begin{equation}
\Delta {\theta _k^{(n+1)}} = ||{\bf{\tilde w}}_k^{(n)} ||_2^2 -
s_k^{(n)}{P_k},\label{dualRIP}
\end{equation}
where $s_k^{(n)}$ is the optimal RRH activation according to
(\ref{RRHonoff}) in the $n$-th iteration given $\bm{\theta}^{(n)}$,
and ${\bf{\tilde w}}_k^{(n)}$ can be obtained from the optimal
beamformer for each UE in the $n$-th iteration given
$\bm{\theta}^{(n)}$.

Hence, with the subgradient projection method, the update equation
for the dual variable $\theta _k$ in the $(n+1)$-th iteration is
given by
\begin{equation}
\theta _k^{(n + 1)} = \theta _k^{(n)} + \xi ^{(n+1)}\Delta {\theta
_k^{(n+1)}},
\end{equation}
where $\xi ^{(n+1)}$ is a sufficiently small positive step size. The
whole procedure to solve (\ref{RIPBCD}) is summarized in Algorithm
\ref{algorithm4}.

\begin{algorithm}[h]
\caption{Lagrangian Dual Decomposition Algorithm for (\ref{RIPBCD})}
\begin{algorithmic}[1]
\STATE \textbf{Initialize} $\bf{s}$, $\bf{w}$, $\bm{\theta}$, $n =
0$; \REPEAT \STATE Make the RRH activation decision $\bf{s}$
according to (\ref{RRHonoff}); \STATE Calculate the beamforming
vector ${\bf{w}}_i$ for each UE according to (\ref{beamformingRIP});
\STATE Update the dual variables $\bm{\theta}$ according to
(\ref{dualRIP}); \STATE Set $n = n + 1$; \UNTIL
$||\theta_k^{(n)}-\theta_k^{(n-1)}||< \delta$ or $n > n_{\max}$.
\end{algorithmic}\label{algorithm4}
\end{algorithm}

{\subsection{Discussions on Implementation and Complexity}

\subsubsection{Parallelized Implementation}

Similarly, each step of Algorithm 3 and Algorithm 4 can be carried
out in a parallel manner. Except for the MSE weights $\bm{\alpha}$
and the Lagrangian dual variables ${\bm{\theta}}$, the computation
for RRH activation ${s_k}$ and beamformer ${{\bf{w}}_i}$ can be
performed in the parallel computing units of BBU pool for each RRH
and each UE, respectively, without any information exchange. After
that, both $\bm{\alpha}$ and $\bm{\theta}$ are updated with
outputted ${s_k}$ and ${{\bf{w}}_i}$. Once the update is done,
${s_k}$ and ${{\bf{w}}_i}$ can be calculated simultaneously again in
the parallel computing units of BBU pool.

\subsubsection{Computational Complexity}
The proposed solution is highly efficient as each step of Algorithm
3 and Algorithm 4 is in closed-form. Specifically, the most
computationally intensive operation is the matrix inversion in
(\ref{ummse}) and (\ref{beamformingRIP}), which have computational
complexity in the order of $\mathcal {O}(N^3)$ and $\mathcal
{O}((MK)^3)$, respectively. Compared with the standard interior
point method, which has a computational complexity in the order of
$\mathcal {O}((MKI)^{3.5})$ in our setting, the proposed RIP-based
solution has a lower computational complexity, especially for
large-scale C-RANs. }

\section{Numerical Results}

In this section, we use simulation and numerical results to evaluate and compare
the performance of the proposed algorithms.

\subsection{Scenarios and Parameters Setting}

The pathloss model is $127 + 25\log_{10}(d)$ with $d$ (km) being the
propagation distance. The fast fading is modeled as independent
complex Gaussian random variable distributed according to $\mathcal
{C}\mathcal {N}(0,1)$ and the noise power is -112 dBm. We assumed
that the RRHs are configured with 2 antennas, the UEs are configured
with 1 antennas, and they are uniformly and independently
distributed in the square region [-500 500] $\times$ [-500 500]
meters. Besides, we assume that the mean arrival rate is the same
for all the UEs, i.e., $\lambda_i = \lambda$. We fix the power
budget of each RRH as $P_k = 2$ W. Each point of the the simulation
results is averaged over 4000 slots in the simulations. Our
simulations mainly compare the proposed algorithms with the full
joint processing algorithm (FJP). In FJP algorithm, all the RRHs are
active, and only the transmission power of RRHs is minimized by
solving a penalized weighted sum rate maximization beamforming
design problem based on the Lyapunov optimization. The FJP algorithm
can achieve the highest cooperative beamforming gain with all the
RRHs active, and the results from the FJP algorithm can serve as a
delay performance lower bound of the proposed algorithms.

\subsection{System Performance versus Control Parameter $V$}

We first consider a C-RAN with $K = 9$ RRHs and $I = 6$ UEs. To
indicate the heterogeneous power consumption of different RRHs and
the fronthaul links, we set the static power consumption as $P_k^c
=(2 + k/2)$ W and set the drain efficiency of each RRH as $\eta _k
=0.4$. In Fig. \ref{fig_delay} and \ref{fig_power}, we evaluate the
average queue delay and the average network power consumption
against the control parameter $V$ when the mean arrival rate
$\lambda$ is 1.25 Mbits/slot and 1.75 Mbits/slot, respectively. For
all the algorithms, a larger traffic mean arrival rate always
results in a longer average delay and higher network power
consumptions. This can be explained by the fact that more power is
needed in order to timely transmit larger amount of traffic
arrivals. Under a given mean arrival rate, the average network power
consumption is a monotonically decreasing function in $V$. The rate
of power decreasing starts to diminish with excessive increase of
$V$. On the other hand, a larger $V$ can adversely affect the delay
performance because the average queue length grows linearly with
$V$. This is due to the fact that the system with a larger $V$ will
emphasize less on delay performance but more on the network power
consumption performance. Therefore the parameter $V$ features the
tradeoff between power consumption and delay performance. From both
figures, it is observed that GSB-based WMMSE algorithm always
outperforms RIP-based WMMSE algorithm, but only by a small margin.
Both algorithms achieve significant power saving compared to the FJP
algorithm.

\begin{figure}
\centering
\includegraphics[scale=0.5]{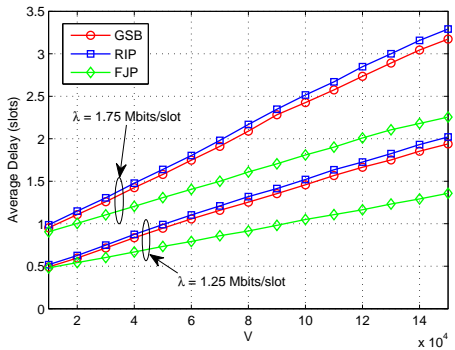}
\caption{Average delay vs. parameter $V$} \label{fig_delay}
\end{figure}

\begin{figure}
\centering
\includegraphics[scale=0.5]{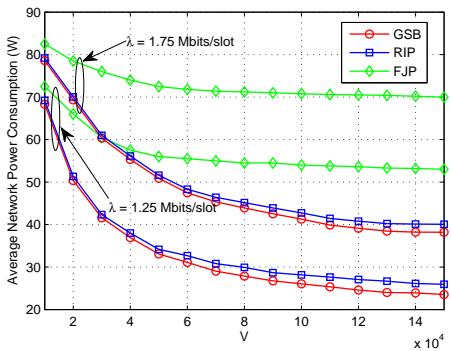}
\caption{Average network power consumption vs. parameter $V$}
\label{fig_power}
\end{figure}

Fig. \ref{fig_RRHs} shows the average number of sleeping RRHs for
the two proposed algorithms against different values of the control
parameter $V$. The average number of sleeping RRHs demonstrates a
similar trend as the average network power consumption performance.
To reduce network power consumption for a large $V$, it is necessary
to turn off as many RRHs and the corresponding fronthaul links as
possible, at the cost of longer average queue length.

\begin{figure}
\centering
\includegraphics[scale=0.5]{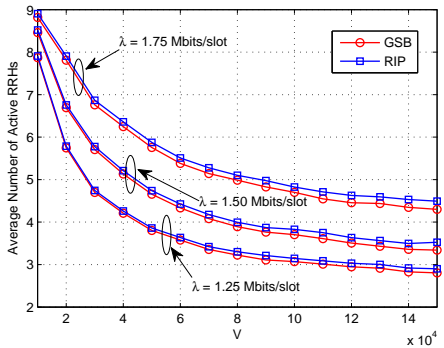}
\caption{Average number of active RRHs vs. parameter
$V$}\label{fig_RRHs}
\end{figure}

\subsection{Quantitative Control of the Power-Delay Tradeoff}

The power-delay tradeoffs of various algorithms are compared in Fig.
\ref{fig_tradeoff}. The different tradeoff points are obtained by
varying the control parameter $V$. The average network power
consumption is decreasing and convex in the average delay. When the
delay is small, slightly increasing the delay requirement can
achieve a significant amount of power saving. When the delay is
excessively large, increasing the delay further results in only a
very small power saving. Fig. \ref{fig_tradeoff} shows that GSB- and
RIP-based algorithm provide significantly better power-delay
tradeoff than the FJP algorithm. In addition, when the average delay
decreases, the power consumption gap between the proposed algorithms
and the FJP algorithm becomes smaller. This means that more RRHs
need to be turned on with a stricter delay requirement. In the
extreme case when all the RRHs are active, all the algorithms will
perform FJP algorithm, yielding the same network power consumption.
In the small delay regime, the proposed algorithms provide a
flexible and efficient means to balance the power-delay tradeoff,
given that a slight loosening of the delay requirement contributes
to significant energy savings. To achieve a certain power-delay
tradeoff, all we need to do is choosing an appropriate control
parameter $V$.

\begin{figure}
\centering
\includegraphics[scale=0.5]{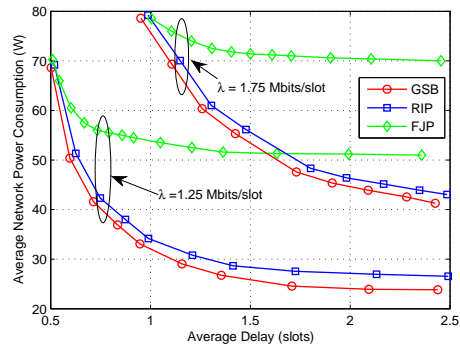}
\caption{Quantitative delay-power tradeoff under different
$\lambda$}\label{fig_tradeoff}
\end{figure}

\subsection{System Performance versus Static Power Consumption}

To compare the average network power consumption with different
static power consumption, we set the static power consumption to be
the same for all RRHs. Fig. \ref{fig_staticpower} shows the average
power consumption as a function of the static power consumption when
the control parameter is $V = 5 \times 10^4$. As expected, the
average network power consumptions for all algorithms are increasing
functions of the static power consumption. In addition, the proposed
algorithms significantly outperform the FJP algorithm, especially in
the high static power consumption regime. When $P^c_k = 4$ W, the
GSB-based algorithm achieves a power saving of 24.6\% and 43.5\%, at
$\lambda = 1.75$ Mbits/slot and 1.25 Mbits/slot, respectively. On
the other hand, the performance gap between the proposed algorithms
and the FJP algorithm decreases as $P^c_k$ decreases. When $P^c_k =
0$, all algorithms require almost the same network power
consumption. When the static power consumption is smaller, more RRHs
will be activated to achieve a higher beamforming gain.

\begin{figure}
\centering
\includegraphics[scale=0.5]{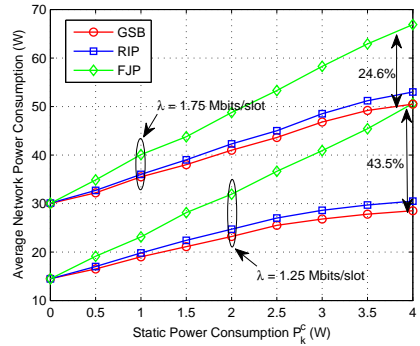}
\caption{Average network power consumption vs. static power
consumption}\label{fig_staticpower}
\end{figure}

{
\subsection{System Performance versus Mean Arrival Rate}

Fig. \ref{fig_rate_delay} compares the average delay with different
mean traffic arrival rates for the two proposed algorithms and the
backward greedy selection (BGS)-based algorithm, which iteratively
switches off one RRH at each step, while re-optimizing the FJP
beamformer for the remaining active RRH set. The BGS-based algorithm
has been shown to often yield optimal or near-optimal solutions for
RRH activation\cite{groupgreen}. We set the control parameter as $V
= 8 \times 10^4$. It is observed that, when the mean traffic arrival
rate is relatively low, the average delays of the two proposed
algorithms are slightly larger than that of the BGS-based algorithm
. Furthermore, the average delays for all algorithms increase
sharply and tend to infinity as the mean arrival rates are beyond
certain thresholds (i.e., the stability regions). Specifically, the
BGS-based algorithm achieves the biggest stability region, followed
by the GSB-based algorithm and the RIP-based algorithm. Therefore,
congestion controls should be adopted to guarantee the queue
stability when the network is with traffic load exceeding current
stability regions. }

\begin{figure}
\centering
\includegraphics[scale=0.5]{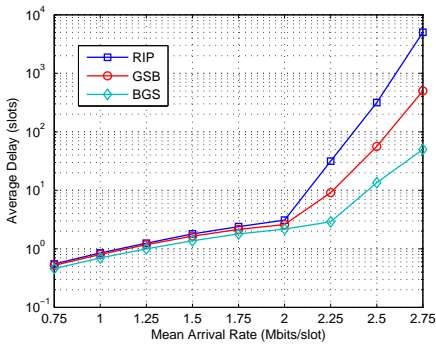}
\caption{Average delay vs. mean arrival rate}\label{fig_rate_delay}
\end{figure}

\subsection{Convergence of the Algorithms} Fig.
\ref{fig_convergence} shows the average number of outer BCD
iterations and the average number of inner iterations required by
the GSB-based WMMSE algorithm and RIP-based WMMSE algorithm,
respectively, with respect to the network scale factor $\Theta$.
Here the network scale factor indicates that the considered C-RAN is
with $K = 9\Theta$ RRHs and $I = 6\Theta$ UEs distributed uniformly
in the squared region $[-500\sqrt \Theta, 500\sqrt \Theta] \times
[-500\sqrt \Theta, 500\sqrt \Theta]$ meters. We can observe that
both GSB-based algorithm and RIP-based algorithm can converge fairly
fast under different network scale, thus both algorithms are highly
scalable to large-scale C-RANs. It is observed that more iterations
are required by the RIP-based solutions compared to that of
GSB-based solutions. Meanwhile, as can be seen in Fig. 5, the
GSB-based solution slightly outperforms the RIP-based solution.
Thus, the GSB-based solution is more preferable in practice.

\begin{figure}
\centering
\includegraphics[scale=0.45]{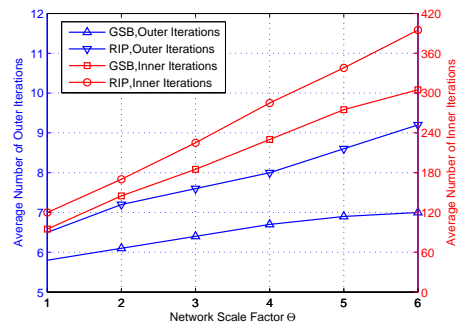}
\caption{Average number of iterations to reach
convergence}\label{fig_convergence}
\end{figure}

\section{Conclusion}

We have developed a joint RRH activation and beamforming algorithm
for a downlink slotted C-RAN, by considering random traffic arrivals
and time-varying channel fadings. The algorithm can achieve flexible
and efficient tradeoff between network power consumption and delay
by adjusting a single parameter. The stochastic optimization problem
of joint RRH activation and beamforming has been transformed into a
penalized weighted sum rate maximization problem based on the
Lyapunov optimization technique. Both GSB and RIP approaches have
been used to reformulate the penalized weighted sum rate
maximization problem. The corresponding algorithms for both
approaches have been proposed, and they were derived based on the
equivalence between the weighted sum rate maximization problem and
the WMMSE problem. The algorithms are guaranteed to converge to a
stationary solution. The solutions do not require any
prior-knowledge of stochastic traffic arrivals and channel
statistics, and can be implemented in a parallel manner. Finally,
the efficiency and the efficacy of the proposed algorithms have been
confirmed by the numerical simulations. For future works, it would
be interesting to consider the queue-aware energy-efficient joint
RRH activation and beamforming algorithms for C-RANs with imperfect
channel state information (CSI) or capacity-limited fronthaul links.

\appendices

\section{Proof of Lemma 1}

By leveraging on the fact that ${(\max [a - b,0] + c)^2} \le {a^2} +
{b^2} + {c^2} - 2a(b - c),\forall a,b,c \ge 0$ and squaring Eq.
(\ref{qdynamic}), we have
\begin{equation}
Q_{i}^2(t + 1) - Q_{i}^2(t) \le \mu _{i}^2(t) + A_{i}^2(t) -
2{Q_{i}}(t)({\mu _{i}}(t) - {A_{i}}(t)),\label{queueinequality}
\end{equation}

{According to the definition of the Lyapunov drift function, we then
have the following expression by summing over all $I$ inequalities
in (\ref{queueinequality}) and taking expectation over both sides,}
\begin{equation}
\begin{array}{l}
\Delta ({\bf{Q}}(t)) \le \frac{1}{2}\mathbb{E}\left[\sum\limits_{i
\in \mathcal {I}} {\mu _i^2(t) + A_i^2(t)|{\bf{Q}}(t)} \right]
\\~~~~~~~~~~- \sum\limits_{i \in \mathcal {I}} {{Q_i}(t)\mathbb{E}[{\mu
_i}(t) - {A_i}(t)|{\bf{Q}}(t)]}. \end{array}\label{driftbound}
\end{equation}

Let $B \ge \frac{1}{2}\sum\limits_{i = 1}^I {\mathbb{E}[{
{A_i^2}(t)} + {{\mu_i^2}(t)}|{\bf{Q}}(t)]}$. Finally, the
upper bound in
(\ref{driftpenaltybound}) can be obtained by adding
$V\mathbb{E}[p(\mathcal {A}(t), {\bf{w}}(t))|{\bf{Q}}(t)]$ to both sides of
(\ref{driftbound}).

\section{Proof of Theorem 2}

Suppose that the traffic arrivals with mean arrival rate
${\bm{\lambda}} = ({\lambda _1},...,{\lambda _I})$ is strictly
interior to the stability region $\mathcal {C}$ (Definition 2) such
that ${\bm{\lambda}} + \epsilon{\bf{1}} \in \mathcal {C}, \forall
\epsilon >0$. Since channel conditions are i.i.d. over slots,
according to Theorem 4.5 in\cite{neely}, there exists a stationary
randomized control policy that is independent of ${\bf{Q}}(t)$ and
yields
\begin{equation}
\begin{array}{l}
\mathbb{E}[{\mu _i}(t)|{\bf{Q}}(t)] = \mathbb{E}[{\mu _i}(t)] \ge {\lambda _i} + \epsilon ,\forall i, \\
\mathbb{E}[p(\mathcal {A}(t),{\bf{w}}(t))|{\bf{Q}}(t)] = \mathbb{E}[p(\mathcal {A}(t),{\bf{w}}(t))] = \bar p(\epsilon ). \\
\end{array}\label{ranpolicy}
\end{equation}

As the stationary randomized control policy is simply a particular
control policy, it certainly satisfies (\ref{driftpenaltybound}) in
Lemma 1. In addition, since (\ref{P1}) is obtained by minimizing the
right-hand-side (R.H.S.) of (\ref{driftpenaltybound}) among all
feasible policies (including the stationary randomized control
policy), by combining (\ref{ranpolicy}) with
(\ref{driftpenaltybound}), we have

\begin{equation}
\begin{array}{l}
\Delta ({\bf{Q}}(t)) \!+\!\! V\mathbb{E}[p(\mathcal {A}(t),
{\bf{w}}(t))|{\bf{Q}}(t)] \le B +
\\V\mathbb{E}[ p(\mathcal {A}(t),
{\bf{w}}(t))|{\bf{Q}}(t)] + \sum\limits_{i \in \mathcal {I}}
{Q_i}(t)\mathbb{E}[{A_i}(t) - \mu _i(t)|{\bf{Q}}(t)]\\
\le B + V\bar p(\epsilon ) - \varepsilon \sum\limits_{i \in \mathcal
{I}} {{Q_i}(t)}.
\end{array}
\end{equation}

Using the results in the proof of Theorem 1, it follows that
$\sum\limits_{i \in \mathcal {I}} {{Q_i}(t)}  \le \frac{{B + V\bar
p(\epsilon )}}{\epsilon }$, which proves that solving (\ref{P1})
optimally stabiles all the queues.

\section{Proof of Proposition 1}

Let $L=KMI$ and define the index set $\mathcal {V} = \{1, 2, ...,
L\}$, then we have ${\bf{w}} = [w_l:l \in \mathcal {V}]$. Define the
set $\mathcal {I} _k = \{(k-1)MI + 1, ..., kMI\}$ as a partion of
$\mathcal {V}$, then we have ${{\bf{\tilde w}}_k} =
{{\bf{w}}_{{\mathcal {I}_k}}} = [w_l:l \in \mathcal {I} _k]$.
Furthermore, define the support of beamformer $\bf{w}$ as $\mathcal
{T}({\bf{w}}) = \{l|w_l \neq 0\}$, then the power consumption model
can be rewritten as
\begin{equation}
p({\bf{w}}) = \sum\limits_{k \in \mathcal {K}} {(\frac{1}{{{\eta
_k}}}||{{\bf{w}}_{{\mathcal {I}_k}}}||_2^2 + P_k^c}
\mathbb{I}(\mathcal {T}({\bf w}) \cap {\mathcal {I}_k} \ne \emptyset
)),
\end{equation}
where $\mathbb{I}(\mathcal {E})$ is an indicator function with value
1 if the event ${\mathcal{E}}$ is true and 0 otherwise. To simplify
notation, let $T({\bf{w}}) = \sum\limits_{k \in \mathcal {K}}
{\frac{1}{\eta _k}||{{\bf{w}}_{{\mathcal {I}_k}}}||_2^2} $, and let
$F(\mathcal {T}({\bf w})) = P_k^c\mathbb{I}(\mathcal {T}({\bf w})
\cap {\mathcal {I}_k} \ne \emptyset )$. It can been seen that
$F(\cdot)$ is combinatorial in ${\bf w}$ and is non-convex. We will
obtain a convex relaxation of the combinatorial objective function.

{We first construct the tightest positively homogeneous lower bound
of $p({\bf{w}})$,} which is given by{\cite{convexrelax}}
\begin{equation}
\begin{array}{l}
{p_h}({\bf{w}}) = \mathop {\inf }\limits_{\phi > 0} \frac{{p(\phi
{\bf{w}})}}{\phi } = \mathop {\inf }\limits_{\phi > 0} \phi
T({\bf{w}}) + \frac{1}{\phi }F(\mathcal {T}({\bf{w}}))\\
~~~~~~~~= 2\sqrt {F(\mathcal {T}({\bf{w}}))T({\bf{w}})}.
\end{array}\label{lowerbound}
\end{equation}
The last equality in (\ref{lowerbound}) is obtained by solving
$\partial \frac{{p(\phi {\bf{w}})}}{\phi }/\partial \phi  = 0$.
{However, ${p_h}({\bf{w}})$ is still non-convex in ${\bf w}$}.

{We next calculate the convex envelope of ${p_h}({\bf{w}})$}. Define
diagonal matrices ${\bf{\Lambda}} \in \mathbb{R}^{L \times L}$,
${\bf{\Xi}} \in \mathbb{R}^{L \times L}$ with the $k$-th diagonal
block being $\eta _k{\bf{I}}_{MI}$ and ${\frac{1}{\eta
_k}}{\bf{I}}_{MI}$, respectively. The Fenchel conjugate of
${p_h}({\bf{w}})$ is given by
\begin{equation}
\begin{array}{l}
p_h^*({\bf{z}})\! =\!\! \mathop {\sup }\limits_{{\bf{w}} \in
{\mathbb{C}^L}} ({{\bf{z}}^T}{{\bf{\Lambda}}^T}{\bf{\Xi}}{\bf{w}}
\!-\! 2\sqrt {F(\mathcal {T}({\bf{w}}))T({\bf{w}})} ) \\
= \mathop{\sup }\limits_{\mathcal {X} \in \mathcal {V}} \mathop
{\sup }\limits_{{{\bf{w}}_\mathcal {X}} \in {\mathbb{C}^{|\mathcal
{X}|}}} ({\bf{z}}_\mathcal {X}^T{\bf{\Lambda}}_{\mathcal
{X}}^T{{\bf{\Xi}}_{\mathcal {X}}}{{\bf{w}}_\mathcal {X}} -
2\sqrt {F(\mathcal {X})T({{\bf{w}}_\mathcal {X}})} )\\
= \left\{ {\begin{array}{*{20}{c}} {0,{\rm{~~~~if~}}{\hat p
^*}({\bf{z}}) = \mathop {\sup }\limits_{\mathcal {X} \subseteq
\mathcal {V},\mathcal {X} \ne \emptyset }
\frac{{||{{\bf{z}}_\mathcal {X}}{{\bf{\Lambda}}_{\mathcal
{X}}}|{|_2}}}{{2\sqrt {F(\mathcal {X})} }} \le 1}, \\
{\infty, {\rm{~~~~otherwise.~~~~~~~~~~~~~~~~~~~~~~~~~~}}} \\
\end{array}} \right.
\end{array}\label{Fen}
\end{equation}
where ${\bf{z}}_\mathcal {X}$ is the $|\mathcal {X}|$-dimensional
vector formed with the entries of ${\bf z}$ indexed by $\mathcal
{X}$ (similarly for $\bf{w}$), and ${\bf{\Lambda}}_{\mathcal {X}}$
is the $|\mathcal {X}| \times |\mathcal {X}|$-dimensional matrix
formed with both the rows and the columns of ${\bf{\Lambda}}$
indexed by $\mathcal {X}$ (similarly for ${\bf{\Xi}}$). Consider the
norm $\hat p({\bf{w}})$ whose dual norm is defined as $\hat
p^*({\bf{z}})$ in Eq. (\ref{Fen}). {According to Proposition 2
in{\cite{convexrelax}}, $\hat p({\bf{w}})$ is the convex envelope of
$p({\bf{w}})$}.

Therefore, the tightest convex positively homogenous lower bound of
$p(\bf{w})$ {has the following inequality:}

\begin{equation}
\begin{array}{l}
\hat p({\bf{w}}) = \mathop {\sup }\limits_{{\hat p^*}({\bf{z}}) \le 1} {{\bf{w}}^T}{\bf{z}} \le \mathop {\sup }\limits_{{\hat p^*}({\bf{z}}) \le 1} \sum\limits_{k \in \mathcal {K}} {||{{\bf{w}}_{{\mathcal {I}_k}}}|{|_2}||{{\bf z}_{{\mathcal {I}_k}}}|{|_2}} \\
\le \mathop {\sup }\limits_{{\hat p^*}({\bf{z}}) \le 1} \left( {\sum\limits_{k \in \mathcal {K}} {\sqrt {\frac{{P_k^c}}{{{\eta _k}}}} ||{{\bf{w}}_{{\mathcal {I}_k}}}|{|_2}} } \right)\left( {\mathop {\max }\limits_{k \in \mathcal {K}} \sqrt {\frac{{{\eta _k}}}{{P_k^c}}} ||{{\bf{z}}_{{\mathcal {I}_k}}}|{|_2}} \right)\\
= 2\sum\limits_{k \in \mathcal {K}} {\sqrt {\frac{{P_k^c}}{{{\eta _k}}}} ||{{\bf{w}}_{{\mathcal {I}_k}}}|{|_2}}, \\
\end{array}
\end{equation}
{which is obtained using the norm properties. Actually, the above
inequality always holds with equality.} Specifically, let
${{{\bf{\bar z}}}_{{\mathcal {I}_k}}} = 2\sqrt
{\frac{{P_k^c}}{{{\eta _k}}}} \frac{{{\bf{w}}_{{\mathcal
{I}_k}}^H}}{{||{\bf{w}}_{{\mathcal {I}_k}}^H|{|_2}}}$ such that
${\hat p^*}({\bf{\bar z}}) = 1$, then we have
\begin{equation}
\hat p({\bf{w}}) = \mathop {\sup }\limits_{{\hat p^*}({\bf{z}}) \le
1} {{\bf{w}}^T}{\bf{z}} \ge \sum\limits_{k \in \mathcal {K}}
{{\bf{w}}_{{\mathcal {I}_k}}^T{{{\bf{\bar z}}}_{{\mathcal {I}_k}}}}
= 2\sum\limits_{k \in \mathcal {K}} {\sqrt {\frac{{P_k^c}}{{{\eta
_k}}}} ||{{\bf{w}}_{{\mathcal {I}_k}}}|{|_2}},
\end{equation}
{which is obtained using the definition of convex envelope.
Therefore, we finally have $2\sum\limits_{k \in \mathcal {K}} {\sqrt
{\frac{{P_k^c}}{{{\eta _k}}}} ||{{\bf{w}}_{{\mathcal {I}_k}}}|{|_2}}
\le \hat p({\bf{w}}) \le 2\sum\limits_{k \in \mathcal {K}} {\sqrt
{\frac{{P_k^c}}{{{\eta _k}}}} ||{{\bf{w}}_{{\mathcal
{I}_k}}}|{|_2}}$, i.e., (\ref{l1_l2}).}

\section{Proof of Theorem 6}

By comparing problem (\ref{ADMMform}) and problem (\ref{varsplit}),
when ${\bf{x}} = {\bf{v}}$ and ${\bf{z}} = {\bf{w}}$, we can observe
that
\begin{equation}
\begin{array}{l}
f({\bf{x}}) = \sum\limits_{k \in \mathcal {K}} {{\beta
_k}||{{{\bf{\tilde v}}}_k}|{|_{{2}}}},\\
g({\bf{z}}) = \sum\limits_{i \in \mathcal {I}}{{\bf{w}}_i^H{\bf{C}}}
{{\bf{w}}_i}- 2\sum\limits_{i \in \mathcal {I}}\! {{\mathop{\rm
Re}\nolimits} \{ {{\bf{d}}_i^H}{{\bf{w}}_i}\} },\\
{\bf{A}} = {\bf{I}}, {\bf{B}} = {\bf{I}}, {\bf{c}} = {\bf{0}},\\
\mathcal {C}_1 = \left( {{\bf{x}}|~||{{{\bf{\tilde
v}}}_k}||_{_{{2}}}^2 \le {P_k}, \forall k \in \mathcal {K}} \right),
\mathcal {C}_2 = {{\bf{z}}}.
\end{array}
\end{equation}

Since ${{\bf{A}}^T}{\bf{A}} = {\bf{I}}$ and ${{\bf{B}}^T}{\bf{B}} =
{\bf{I}}$ are invertible, and both $\mathcal {C}_1 $ and $\mathcal
{C}_2$ are convex sets, then according to Theorem 3, we can conclude
that every limit point ${\bf{w}}^{(n)}$ and ${\bf{v}}^{(n)}$
generated by Algorithm 2 is an optimal solution of problem
(\ref{problemw}).


\begin{thebibliography}{99}

\bibitem{chiling}
C. I, C. Rowell, S. Han, Z. Xu, G. Li, and Z. Pan, ``Toward green
and soft: a 5G perspective,'' \emph{IEEE Commun. Mag.}, vol. 52, no.
2, pp. 66-73, Feb. 2014.



\bibitem{cmcc}
China Mobile Research Institute, ``C-RAN: The road towards green
RAN,'' \emph{White Paper}, ver. 3.0, Dec. 2013.


\bibitem{5G1}
P. Rost, C. J. Bernardos, A. D. Domenico, M. D. Girolamo, M. Lalam,
A. Maeder, D. Sabella, and D. W${\rm{\ddot u}}$bben, ``Cloud
technologies for flexible 5G radio access networks,'' \emph{IEEE
Commun. Mag.}, vol. 52, no. 5, pp. 68-76, May 2014.


\bibitem{CranMGMN}
NGMN alliance, ``Suggestion on potential solution to C-RAN,'' Jan.
2013.

\bibitem{5G}
C. Wang, F. Haider, X. Gao, X. You, Y. Yang, D. Yuan, H. M. Aggoune,
H. Haas, S. Fletcher, and E. Hepsaydir, ``Cellular architecture and
key technologies for 5G wireless communication networks,''
\emph{IEEE Commun. Mag.}, vol. 52, no. 2, pp. 122-130, Feb. 2014.



{



\bibitem{quek}
J. Zhao, T. Q. S. Quek, and Z. Lei, ``Coordianted multipoint
transmission with limited backhaul data transfer,'' \emph{IEEE Trans
Wireless Commun.}, vol. 12, no. 6, pp. 2762-2774, Jun. 2013.

\bibitem{lau}
F. Zhuang and V. K. N. Lau, ``Backhaul limited asymmetric
cooperation for MIMO cellular networks via semidefinite
relaxation,'' \emph{IEEE Trans. Signal Process.}, vol. 62, no. 3,
pp. 684-693, Feb. 2014.

\bibitem{VNHa}
V. N. Ha, L. B. Le, and N. D. Dao, ``Coordinated multipoint (CoMP)
transmission design for Cloud-RANs with limited fronthaul capacity
constraints,'' \emph{IEEE Trans. Veh. Tech.}, early access, 2015.


\bibitem{BDai}
B. Dai and W. Yu, ``Sparse beamforming and user-centric clustering
for downlink cloud radio access network,'' \emph{IEEE Access}, vol.
2, pp. 1326-1339, Oct. 2014.


\bibitem{MINP2}
Y. Cheng, M. Pesavento, and A. Philipp, ``Joint network optimization
and downlink beamforming for CoMP transmissions using mixed integer
conic programming,'' \emph{IEEE Trans. Signal Process.}, vol. 61,
pp. 3972-3987, Aug. 2013.


\bibitem{SLuo}
S. Luo, R. Zhang, and T. J. Lim, ``Downlink and uplink energy
minimization through user association and beamforming in cloud
RAN,'' \emph{IEEE Trans. Wireless Commun.}, vol. 14, no. 1, pp.
494-508, Jan. 2015.




\bibitem{fronthaulpower}
S. Tombaz, P. Monti, K. Wang, A. V${\rm{\ddot a}}$stberg, M.
Forzati, and J. Zander, ``Impact of backhauling power consumption on
the deployment of heterogeneous mobile networks,'' in \emph{Proc.
IEEE Global Commun. Conf. (GLOBECOM)}, Houston, TX, USA, Dec. 2011,
pp. 1-5.

\bibitem{groupgreen}
Y. Shi, J. Zhang, and K. B. Letaief, ``Group sparse beamforming for
green cloud-RAN,'' \emph{IEEE Trans Wireless Commun.}, vol. 13, no.
5, pp. 2809-2823, May 2014.

\bibitem{BDaiJSAC}
B. Dai and W Yu, ``Energy efficiency of downlink transmission
strategies for cloud radio access networks,'' \emph{IEEE J. Sel.
Areas Commn.}, Availabe: http://arxiv.org/abs/1601.01070v2.


\bibitem{cuiying}
Y. Cui, V. K. N. Lau, R. Wang, H. Huang, and S. Zhang, ``A survey on
delay-aware resource control for wireless systems - large deviation
theory, stochastic Lyapunov drift and distributed stochastic
learning,'' \emph{IEEE Trans. Inf. Theory}, vol. 58, no. 3, pp.
1677-1701, Mar. 2012.

\bibitem{MNeelyFair}
M. J. Neely, E. Modiano, and C. Li, ``Fairness and optimal
stochastic control for heterogeneous networks,'' \emph{IEEE/ACM
Trans. Netw.}, vol. 16, no. 2, pp. 396-409, Apr. 2008.

\bibitem{LongBLe}
L. B. Le, E. Modiano, and N. B. Shroff, ``Optimal control of
wireless networks with finite buffers,'' \emph{IEEE/ACM Trans.
Netw.}, vol. 20, no. 4, pp. 1316-1329, Aug. 2012.


\bibitem{NeelyDelayAnalysis}
M. J. Neely, ``Delay analysis for maximal scheduling with flow
control in wireless networks with bursty traffic,'' \emph{IEEE/ACM
Trans. Netw.}, vol. 17, no. 4, pp. 1146-1159, Aug. 2009.





\bibitem{r1}
M. J. Neely, ``Energy optimal control for time-varying wireless
networks,'' \emph{IEEE Trans. Inf. Theory}, vol. 52. no. 7, pp.
2915-2934, Jul. 2006.


\bibitem{r2}


H. Ju, B. Liang, J. Li, and X. Yang, ``Dynamic power allocation for
throughput utility maximization in interference-limited networks,''
\emph{IEEE Wireless Commun. Lett.}, vol. 2, no. 1, pp. 22-25, Jan.
2013.


\bibitem{park}
S. Park, O. Simeone, O. Sahin, and S. Shamai, ``Joint precoding and
multivariate backhaul compression for the downlink of cloud radio
access networks,'' \emph{IEEE Trans. Signal Process.}, vol. 61, no.
22, pp. 5646-5658, Nov. 2013.
}




\bibitem{powermodel}
G. Auer, V. Giannini, C. Desset, I.  Godor, P. Skillermark, M.
Olsson, M. A. Imran, D. Sabella, M. J. Gonzalez, O. Blume, and A.
Fehske, ``How much energy is needed to run a wireless network?''
\emph{IEEE Wireless Commun.}, vol. 18, pp. 40-49, Oct. 2011.

\bibitem{PON}
Y. A. Sambo, M. Z. Shakir, K. A. Qaraqe, E. Serpedin, M. A. Imran,
``Expanding cellular coverage via cell-edge deployment in
heterogeneous networks: spectral efficiency and backhaul power
consumption perspectives,'' \emph{IEEE Commun. Mag.}, vol. 52, no.
6, pp. 140-149, Jun. 2014.





\bibitem{neely}
M. J. Neely, \emph{Stochastic Network Optimization with Application
to Communication and Queueing Systems}, Morgan\&Claypool., 2010.


\bibitem{neelyutility}
M. J. Neely, ``Delay-based network uility maximization,''
\emph{IEEE/ACM Trans. Netw.}, vol. 21. no. 1, pp. 41-54, Feb. 2013.


\bibitem{ZLuoSzhang}
Z. Q. Luo and S. Zhang, ``Dynamic spectrum management: Complexity
and duality,'' \emph{IEEE J. Sel. Areas Signal Process.}, vol. 2,
no. 1, pp. 57-73, Feb. 2008.



\bibitem{gcpaper}
J. Li, J. Wu, M. Peng, W. Wang, and V. K. N. Lau, ``Queue-aware
joint remote radio head activation and beamforming for green cloud
radio access networks,'' in \emph{Proc. IEEE Global Commun. Conf.
(GLOBECOM)}, San Diego, CA, USA, Dec. 2015.

\bibitem{mixednorm}
Y. C. Eldar and G. Kutyniok, \emph{Compressed Sensing: Theory and
Applications}, Cambridge, U.K.: Cambridge Univ. Press, 2012.

\bibitem{ECandes}
E. Candes and T. Tao, ``Near-optimal signal recovery from random
projections: Universal encoding strategies?'' \emph{IEEE Trans. Inf.
Theory}, vol. 52, no. 12, pp. 5406-5425, Dec. 2006.


\bibitem{wmmse1}
S. S. Christensen, R. Agarwal, E. Carvalho, and J. M. Cioffi,
``Weighted sum-rate maximization using weighted MMSE for MIMO-BC
beamforming design,'' \emph{IEEE Trans Wireless Commun.}, vol. 7,
no. 12, pp. 4792-4799, Dec. 2008.

\bibitem{wmmse2}
Q. Shi, M. Razaviyayn, Z. Luo, and C. He, ``An iteratively weighted
MMSE approach to distributed sum-utility maximization for a MIMO
interfering broadcast channel,'' \emph{IEEE Trans. Signal Process.},
vol. 59, no. 9, pp. 4331-4340, Sep. 2011.




\bibitem{admm2}
W. Deng, W. Yin, and Y. Zhang, ``Group sparse optimization by
alternating direction method,'' Technical Report, Rice University,
2011.

\bibitem{admm1}
S. Boyd, N. Parikh, E. Chu, B. Peleato, and J. Eckstein,
``Distributed optimization and statistical learning via the
alternating direction method of multipliers,'' \emph{Found. Trends
Mach. Learn.}, vol. 3, no. 1, pp. 1-122, 2011.


\bibitem{subdnorm}
F. Bach, R. Jenatton, J. Mairal, and G. Obozinski, ``Convex
optimization with sparsity-inducing norms,'' in \emph{Optimization
for Machine Learning}, S. Sra, S. Nowozin, and S. J. Wright, Eds.
Cambridge, MA: MIT Press, 2011.



\bibitem{BnB}
D. Wei and A. V. Oppenheim, ``A branch-and-bound algorithm for
quadratically-constrained sparse filter design,'' \emph{IEEE Trans.
Signal Process.}, vol. 61, no. 4, pp. 1006-1018, Feb. 2013.

\bibitem{boydconvex}
S. Boyd and L. Vandenberghe, \emph{Convex Optimization}, Cambridge,
U.K.: Cambridge University Press, 2004.


\bibitem{convexrelax}
G. Obozinski and F. Bach, ``Convex relaxation for combinatorial
penalties,'' a Technical Report 00694765, HAL, 2012.


\bibitem{bcd}
P. Teseng, ``Convergence of a block coordinate descent method for
nondifferentiable minimization,'' \emph{J. Opt. Theory and App.},
vol. 109, no. 3, pp. 475-494, Jun. 2001.

\end{thebibliography}
\end{document}